\newcommand{\R}{\mathbb{R}}
\newcommand{\ER}{\ensuremath{\exists\R}\xspace}
\newtheorem{definition}{Definition}
\newtheorem{conjecture}{Conjecture}
\newtheorem{corollary}{Corollary}
\newtheorem{theorem}{Theorem}
\newtheorem{lemma}{Lemma}
\newtheorem{problem}{Problem}
\DeclareMathOperator{\mcr}{max-cr}
\DeclareMathOperator{\mwtcr}{max-wt-cr}
\DeclareMathOperator{\mrcr}{max-\overline{cr}}
\DeclareMathOperator{\mccr}{max-cr^\circ}
\DeclareMathOperator{\mcut}{max-cut}
\DeclareMathOperator{\bcr}{bcr}
\DeclareMathOperator{\obf}{obf}
\title{On the Maximum Crossing Number\thanks{A preliminary version of
    this paper appeared in the Proceedings of the 28th International
    Workshop on Combinatorial Algorithms (IWOCA 2017).}}
\author{Markus~Chimani\thanks{Universit\"at Osnabr\"uck,
    \texttt{markus.chimani@uni-osnabrueck.de}} \and
  Stefan~Felsner\thanks{Technische Universit\"at Berlin,
    \texttt{felsner@math.tu-berlin.de}} \and
  Stephen~Kobourov\thanks{University of Arizona,
    \texttt{kobourov@cs.arizona.edu}} \and
  Torsten~Ueckerdt\thanks{Karlsruhe Institute of Technology,
    \texttt{torsten.ueckerdt@kit.edu}} \and Pavel~Valtr\thanks{Charles
    University, \texttt{valtr@kam.mff.cuni.cz}} \and
  Alexander~Wolff\thanks{Universit\"at W\"urzburg,
    \href{http://orcid.org/0000-0001-5872-718X}{orcid.org/0000-0001-5872-718X}}}
\date{}
\begin{document}

\maketitle

\begin{abstract}
  Research about crossings is typically about minimization.  In this
  paper, we consider \emph{maximizing} the number of crossings over
  all possible ways to draw a given graph in the plane.  Alpert et
  al.\ [Electron.~J.\ Combin., 2009] conjectured that any graph has a
  \emph{convex} straight-line drawing, e.g., a drawing with vertices
  in convex position, that maximizes the number of edge crossings.  We
  disprove this conjecture by constructing a planar graph on twelve
  vertices that allows a non-convex drawing with more crossings than
  any convex one.  Bald et al.\ [Proc.\ COCOON, 2016] showed that it
  is NP-hard to compute the maximum number of crossings of a geometric
  graph and that the weighted geometric case is NP-hard to
  approximate.  We strengthen these results by showing hardness of
  approximation even for the unweighted geometric case and prove that
  the unweighted topological case is NP-hard.
\end{abstract}

\section{Introduction}

While traditionally in graph drawing one wants to minimize the number
of edge crossings, we are interested in the opposite problem.
Specifically, given a graph $G$, what is the maximum number of edge
crossings possible, and what do embeddings\footnote{We consider only
  embeddings where vertices are mapped to distinct points in the plane
  and edges are mapped to continuous curves containing no vertex
  points other than those of their end vertices.} of $G$ that attain
this maximum look like?  Such questions have first been asked as early
as in the 19th
century~\cite{baltzer1885erinnerung,staudacher1893lehrbuch}.  Perhaps
due to the counterintuitive nature of the problem (as illustrated by
the disproved conjecture below) and due to the lack of established
tools and concepts, little is known about maximizing the number of
crossings.

Besides the theoretical appeal of the problem, motivation for this
problem can be found 
in analyzing the worst-case scenario when edge crossings are
undesirable but the placement of vertices and edges cannot be
controlled.

There are three natural variants of the crossing maximization problem
in the plane.  In the \emph{topological} setting, edges can be drawn
as curves, so that any pair of edges crosses at most once, and
incident edges do not cross.  In the straight-line variant (known for
historical reasons as the \emph{rectilinear} setting), edges must be
drawn as straight-line segments.  If we insist that the vertices are
placed in convex position (e.g., on the boundary of a disk or a convex
polygon) and the edges must be routed in the interior of their convex
hull, the topological and rectilinear settings are equivalent,
inducing the same number of crossings: the number only depends on the
order of the vertices along the boundary of the disk.  In this
\emph{convex setting}, a pair of edges crosses if and only if its
endpoints alternate along the boundary of the convex hull.

\paragraph{The topological setting.}

The maximum crossing number was introduced by
Ring\-el~\cite{ringel1963extremal} in 1963 and independently by
Gr\"unbaum~\cite{grunbaum1972arrangements} in 1972.

\begin{definition}[\cite{survey}]
  The \emph{maximum crossing number} of a graph $G$, $\mcr(G)$, is the
  largest number of crossings in any topological drawing of $G$ in
  which no three distinct edges cross in one point and every pair of
  edges has at most one point in common (a shared endpoint counts,
  touching points are forbidden).
\end{definition}

In particular, $\mcr(G)$ is the maximum number of crossings in the
topological setting.  Note that only independent pairs of edges, that
is those edge pairs with no common endpoint, can cross.  The number of
independent pairs of edges in a graph $G = (V,E)$ is given by $M(G) :=
{|E|\choose 2} - \sum_{v \in V} {\deg(v)\choose 2}$, a parameter
introduced by Piazza~\textit{et~al.}~\cite{piazza1991properties}.  For
every graph $G$, we have $\mcr(G) \leq M(G)$, and graphs for which
equality holds are known as \emph{thrackles} or
\emph{thrackable}~\cite{woodall71}.  Conway's Thrackle
Conjecture~\cite{lovasz1997conway} states that thrackles are precisely
the pseudoforests (graphs in which every connected component has at
most one cycle) in which there is no cycle of length four and at most
one odd cycle.  Equivalently, this famous conjecture states that
$\mcr(G) = M(G)$ implies $|E(G)| \leq |V(G)|$~\cite{woodall71}.

Another famous open problem is the Subgraph Problem posed by Ringeisen
et al.~\cite{ringeisen1991subgraphs}: Is it true that whenever $H$ is
a subgraph or induced subgraph of $G$, then we have $\mcr(H) \leq
\mcr(G)$?

Let us remark that allowing pairs of edges to only touch without
properly crossing each other, would indeed change the problem.  For
example, the $4$-cycle $C_4$ has two pairs of independent edges, and
$C_4$ can be drawn with one pair crossing and the other pair touching,
but $C_4$ is not thrackable; it is impossible to draw $C_4$ with both
pairs crossing, i.e., $\mcr(C_4)$ is $1$ and not $2$.

It is known that $\mcr(K_n)=\binom{n}{4}$~\cite{ringel1963extremal}
and that every tree is thrackable, i.e., $\mcr(G) = M(G)$ whenever $G$
is a tree~\cite{piazza1991properties}.  We refer to Schaefer's
survey~\cite{survey} for further known results on the maximum crossing
numbers of several graph classes.

\paragraph{The straight-line setting.}

The maximum rectilinear crossing number was introduced by
Gr\"un\-baum~\cite{grunbaum1972arrangements}; see
also~\cite{furry1977maximal}.

\begin{definition} 
  The \emph{maximum rectilinear crossing number} of a graph $G$,
  $\mrcr(G)$, is the largest number of crossings in any straight-line
  drawing of $G$.
\end{definition}

For every graph $G$, we have $\mrcr(G) \leq \mcr(G) \leq M(G)$, where
each inequality is strict for some graphs, while equality is possible
for other graphs.  For example, for the $n$-cycle $C_n$ we have
$\mrcr(C_n) = \mcr(C_n) = M(C_n) = n(n-3)/2$ for
odd~$n$~\cite{woodall71}, while $\mrcr(C_n) = M(C_n) - n/2+1$ and
$\mcr(C_n) = M(C_n)$ for even~$n$ different than
four~\cite{steinitz1923maximalzahl,alpert2009}.  For further
rectilinear crossing numbers of specific graphs we again refer to
Schaefer's survey~\cite{survey}.


For several graph classes, such as trees, the maximum (topological)
crossing number $\mcr(G)$ is known exactly, while little is known
about the rectilinear crossing number $\mrcr(G)$.  For planar graphs,
Verbitsky~\cite{verbitsky2008obfuscation} studied what he called the
\emph{obfuscation number}.  He defined $\obf(G)=\mrcr(G)$ and showed
that $\obf(G)<3|V(G)|^2$.  Note that this holds only for planar
graphs. 
For maximally planar graphs, that is, triangulations, Kang et
al.~\cite{kprsv08} give a ($56/39-\varepsilon$)-approximation for
computing $\mrcr(G)$.

\paragraph{The convex setting.}

It is easy to see that in the convex setting we may assume, without
loss of generality, that all vertices are placed on a circle and edges
are drawn as straight-line segments.  In fact, if the vertices are in
convex position and edges are routed in the interior of the convex
hull of all vertices, then a pair of edges is crossing if and only if
the vertices of the two edges alternate in the circular order along
the convex hull.

\begin{definition} 
  The {\em maximum convex crossing number} of a graph $G$, $\mccr(G)$,
  is the largest number of crossings in any drawing of~$G$ where the
  vertices lie on the boundary of a disk and the edges in the
  interior.
\end{definition}

From the definitions we now have that, for every graph $G$,
\begin{equation}
  \mccr(G) \leq \mrcr(G) \leq \mcr(G) \leq M(G),\label{eq:four-parameters}
\end{equation}
but this time it is not clear whether or not the first inequality can
be strict.  It is tempting (and rather intuitive) to say that in order
to get many crossings in the rectilinear setting, all vertices should
always be placed in convex position.  In other words, this would mean
that the maximum rectilinear crossing number and maximum convex
crossing number always coincide.  Indeed, this has been conjectured by
Alpert et al.\ in 2009.

\begin{conjecture}[Alpert et al.~\cite{alpert2009}] \label{c-general}
  Any graph~$G$ has a drawing with vertices in convex position that
  has $\mrcr(G)$ crossings, that is, $\mrcr(G)=\mccr(G)$.
\end{conjecture}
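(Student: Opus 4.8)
).}
The plan is to show that the first inequality in~\eqref{eq:four-parameters} is strict for some graph, thereby refuting the conjecture. Concretely, I would exhibit a single planar graph $G$ on twelve vertices together with a straight-line drawing $D$ of $G$ whose vertices are \emph{not} in convex position and for which $\operatorname{cr}(D) > \mccr(G)$; since $\operatorname{cr}(D) \le \mrcr(G)$ always, this gives $\mccr(G) < \mrcr(G)$. The argument then has two parts: (i) describe $D$ and count $\operatorname{cr}(D)$, which is finite and explicit, and (ii) prove the upper bound $\mccr(G) < \operatorname{cr}(D)$, which is the real content.

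For part~(i) I would be guided by the observation that also explains why the conjecture looks plausible: in any straight-line drawing, four points contribute a crossing to an edge pair exactly when they are in convex position and that pair consists of the two diagonals, so a point interior to the triangle of three others destroys the potential crossing among those four and never creates one. The loophole is global. In a \emph{convex} drawing the cyclic order of every four vertices is the restriction of one common cyclic order of $V(G)$, whereas a non-convex point set can realize "crossing patterns" on different four-subsets that no single cyclic order makes simultaneously consistent (the same gap that separates realizable order types from abstract ones). So I would build $G$ around a low-degree vertex $w$ whose incident edges, when $w$ is placed inside a convex polygon formed by some of its neighbours, can be routed to cross a large, carefully chosen family of "outer" edges --- a family no cyclic order of the twelve vertices can make alternate with all of $w$'s edges at once. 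Planarity is not essential to this mechanism, but insisting on it makes the counterexample more striking; in practice I would fix a candidate combinatorial layout, then place points and maximize crossings by local perturbation, confirming the count by a small computation.

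Part~(ii) is the main obstacle: showing that no cyclic ordering $\pi$ of the twelve vertices attains $\operatorname{cr}(D)$ alternating independent edge pairs. This is a finite check over the $11!/2$ cyclic orders and hence decidable by computer, but for a hand proof I would argue structurally, using that (a) for a fixed edge $e = uv$ the vertices $u$ and $v$ split the remaining circle into two arcs, and $e$ crosses precisely those edges of $G - u - v$ with one endpoint in each arc, so every edge forfeits a predictable quota of potential crossings; (b) the automorphism group of $G$ lets me fix the cyclic position of a chosen vertex orbit, collapsing the number of cases; and (c) a charging argument (or a small linear program) then caps $\operatorname{cr}_\pi(G)$ uniformly over the remaining configurations by $\operatorname{cr}(D) - 1$, possibly via local-exchange ("shift") moves on $\pi$ that never decrease the crossing count. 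The delicate point is that with only twelve vertices the gap between the non-convex drawing and the best convex one must be thin, so the upper bound has to be essentially tight --- a crude estimate will overshoot --- and a natural fallback, which the paper may well take, is to establish the structural reductions by hand and then discharge the residual finite family of orderings by exhaustive search.
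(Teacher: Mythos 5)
Your overall strategy is the right one, and it is the paper's at the highest level: exhibit a concrete (indeed planar, $12$-vertex) graph together with a non-convex straight-line drawing whose crossing count provably exceeds that of every convex drawing, so that $\mccr(G)<\mrcr(G)$ and the conjecture fails. But as written the attempt has a genuine gap: it never produces the counterexample. Both part~(i) and part~(ii) are left as plans (``fix a candidate combinatorial layout'', ``maximize crossings by local perturbation'', ``a charging argument or a small linear program'', ``fall back to exhaustive search''), and the entire mathematical content of the disproof lies exactly in those two steps --- a specific graph, a specific drawing with an explicit count, and a proof that no cyclic order of the vertices matches it. Your own ``thin gap'' worry is precisely why a crude charging bound will overshoot: in the paper's $12$-vertex example the non-convex drawing wins by a single unit of (weighted) loss, so the convex upper bound must be exactly tight, and that cannot be obtained without committing to a concrete graph and a careful bookkeeping scheme.

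For comparison, here is how the paper fills the gap. The base graph $H$ is a $9$-cycle $v_0,\dots,v_8$ plus an apex $z$ joined to $v_0,v_3,v_6$; your idea of a low-degree vertex placed non-convexly so that its edges cross many cycle edges is exactly the role of $z$. The counterexamples are obtained by twin-splitting vertices of $H$: the graph $H(4)$ on $37$ vertices admits a two-case analysis, and the $12$-vertex planar graphs $H_{16}$ of Appendix~\ref{sec:small} arise by splitting two non-adjacent degree-$2$ cycle vertices. The key structural tool is Lemma~\ref{lem:twins}: some crossing-maximal convex drawing places each twin class consecutively, which collapses convex drawings of the split graph to convex drawings of a small \emph{weighted} version of the $10$-vertex base graph ($W_{14}$). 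The accounting is then done in terms of \emph{loss} (the weight of independent pairs that do \emph{not} cross) rather than crossings, with span/avoidance lemmas (Lemmas~\ref{l:cycle-edge-avoids}--\ref{l:k-vs.-avoidances}) playing the role of your ``forfeited quota'' observation; the non-convex drawing of $H_{16}$ has loss $13$, and a six-case analysis over the circular positions of the three neighbours of $z$ shows every convex drawing has loss at least $14$. So your proposed reductions (arc-splitting per edge, symmetry to fix positions) do correspond to real steps in the paper, but the missing ideas are the twin-split construction itself, the reduction via Lemma~\ref{lem:twins} to a weighted $10$-vertex graph, and the loss bookkeeping that makes the tight $13$-versus-$14$ comparison feasible by hand rather than by machine search.
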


\paragraph{Our contribution.}
Our main result is that Conjecture~\ref{c-general} is false.  We
provide several counterexamples in Section~\ref{sec:counterex}.  There
we first present a rather simple analysis for a counterexample with
$37$ vertices. We then improve upon this by showing that the planar
$12$-vertex graphs shown in the middle of
Figure~\ref{fig:small-counterexamples} are counterexamples as well.
Before we get there, we discuss the four parameters
in~\eqref{eq:four-parameters} and relations between them in more
detail, and introduce some new problems in
Section~\ref{sec:preliminaries}.  Finally, in
Section~\ref{sec:complexity}, we investigate the complexity and
approximability of crossing maximization and show that the topological
problem is NP-hard, while the rectilinear problem is even hard to
approximate.

\section{Preliminaries and Basic Observations}
\label{sec:preliminaries}

Here we discuss the chain of inequalities
in~\eqref{eq:four-parameters} and extend it by several items.  Recall
that for a graph $G$, $M(G)$ denotes the number of independent pairs
of edges in~$G$.  By~\eqref{eq:four-parameters} we have $\mccr(G) \leq
M(G)$.  
We next show that this inequality is tight up to a factor of~$3$.  The
first part of the next lemma is due to
Verbitsky~\cite{verbitsky2008obfuscation}.

\begin{lemma}\label{lem:lowerbound-M}
  For every graph $G$, we have $M(G)/3 \leq \mccr(G)$.  Moreover, if
  $G$ has chromatic number at most $3$, then $M(G)/2 \leq \mccr(G)$.
\end{lemma}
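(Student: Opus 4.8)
The plan is to use a random-placement argument on the circle. Fix a circular order of the vertices by placing each vertex independently and uniformly at random on the boundary of the disk; this gives a random convex drawing of $G$. For each independent pair of edges $\{e,f\}$ (of which there are $M(G)$), the pair contributes a crossing precisely when the four endpoints alternate along the circle. Among the three ways to cyclically interleave two disjoint pairs of points, exactly one is the alternating one, so $\Pr[\{e,f\}\text{ crosses}] = 1/3$. By linearity of expectation the expected number of crossings is $M(G)/3$, hence some convex drawing achieves at least $M(G)/3$ crossings, giving $M(G)/3 \le \mccr(G)$.

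For the stronger bound under $\chi(G)\le 3$, I would not place vertices independently but instead exploit a proper $3$-colouring $V = V_1 \cup V_2 \cup V_3$. The idea is to partition the circle into three consecutive arcs $A_1, A_2, A_3$ (in this cyclic order) and place the vertices of $V_i$ somewhere on arc $A_i$; within each arc the internal order can still be chosen (say randomly, or it will not matter for the pairs that count). Now consider an independent pair $\{e,f\}$. Since the colouring is proper, each of $e,f$ has its two endpoints in two \emph{different} colour classes. One checks that whenever $e$ and $f$ together use at least two of the three arcs in a suitable ``spread'' configuration, the alternation probability improves; the key combinatorial claim is that for each independent pair the conditional probability of crossing, given the arc assignment forced by the colouring, is at least $1/2$ — and in the cases where the endpoints are distributed so that alternation is not automatic, a random choice of the within-arc orders yields crossing probability exactly $1/2$. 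Summing over all $M(G)$ pairs and again invoking linearity of expectation then yields a drawing with at least $M(G)/2$ crossings.

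The main obstacle is the case analysis in the second part: one must verify, for every way the (at most four) endpoints of an independent pair $\{e,f\}$ can be distributed among the three colour classes and hence among the three arcs, that the probability of the alternating pattern is at least $1/2$. The cases split according to how many distinct arcs are hit (two or three) and, when two arcs are hit, which vertices share an arc; edges being independent and the colouring being proper rules out the bad configurations (e.g.\ both endpoints of one edge in the same class), and in each remaining case either alternation is forced or exactly one of the two within-arc orderings produces it. Once this per-pair estimate is established, the rest is just linearity of expectation, exactly as in the first part.
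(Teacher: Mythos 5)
Your proposal is correct and follows essentially the same route as the paper: a uniformly random circular order gives crossing probability $1/3$ per independent pair, and for $\chi(G)\le 3$ the three colour classes are placed on three disjoint arcs with random internal orders, giving crossing probability at least $1/2$ per pair, followed by linearity of expectation. The only cosmetic remark is that in the three-arc construction the per-pair crossing probability is in fact exactly $1/2$ in every configuration (alternation is never forced), since two independent edges with properly coloured endpoints always share at least one arc and the random order within that shared arc decides the crossing.
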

\begin{proof}
  First, let $G$ be any graph.  We place the vertices of $G$ on a
  circle in a circular order chosen uniformly at random from the set
  of all their circular orders.  Then each pair of independent edges
  of $G$ is crossing with probability $1/3$ and there must be an
  ordering witnessing $\mccr(G)\geq M(G)/3$.
 
  Second, assume that $G$ can be properly colored with at most three
  colors.  In this case we place the vertices of $G$ on a circle in
  such a way that the three color classes occupy three pairwise
  disjoint arcs.  In each color class, we order the vertices randomly,
  choosing each linear order with the same probability.  Doing this
  independently for each color class, each pair of independent edges
  is crossing with probability $1/2$.  Hence, there must be an
  ordering witnessing $\mccr(G)\geq M(G)/2$.
\end{proof}

By Lemma~\ref{lem:lowerbound-M} we can extend the chain of
inequalities in~\eqref{eq:four-parameters} as follows: For every
graph~$G$, we have
\begin{equation}
  M(G) /3 \leq \mccr(G) \leq \mrcr(G) \leq \mcr(G) \leq M(G).\label{eq:four-parameters-new}
\end{equation}
%
The constant $1/3$ in the first inequality
in~\eqref{eq:four-parameters-new} cannot be improved: Consider the six
edges connecting a $4$-tuple of vertices in a rectilinear drawing of
the complete graph $K_n$. There is exactly one crossing among them if
the four vertices are in convex position, and there is no crossing
among them otherwise. It follows that the rectilinear maximum crossing
number of $K_n$ is attained if and only if the vertices are in convex
position, and in this case there are $M(K_n)/3=\binom{n}{4}$
crossings. Since Ringel~\cite{ringel1963extremal} proved $\mcr(K_n) =
\binom{n}{4}$, we get $\mccr(K_n) = \mrcr(K_n) = \mcr(K_n) = M(K_n)/3
= \binom{n}{4}$.

\medskip

We now introduce another item in the chain of
inequalities~\eqref{eq:four-parameters-new}.  We say that a
rectilinear drawing of a graph $G$ is \emph{separated} if there is a
line $\ell$ that intersects every edge of $G$.  Clearly, this is only
possible if $G$ is bipartite and in this case the line $\ell$
separates the vertices of the two color classes of $G$.

\begin{figure}[tb]
  \centering
  \includegraphics{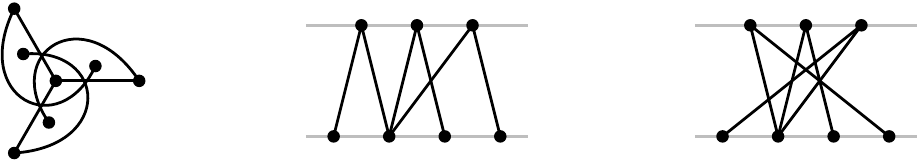}
  \caption{The smallest tree $G$ that is not a caterpillar with a
    topological drawing with $\mcr(G) = M(G) = 9$ crossings (left), a
    $2$-layer drawing with $\bcr(G) = 1$ crossings (middle) and a
    $2$-layer drawing with $\mrcr(G) = M(G) - \bcr(G) = 8$ crossings
    (right).}
  \label{fig:spider}
\end{figure}

Particularly nice are \emph{separated convex drawings}, i.e.,
separated drawings with vertices in convex position; see
Fig.~\ref{fig:spider} for an example.  Drawing bipartite graphs in the
separated convex model is equivalent to the $2$-layer model where the
vertices of the two color classes are required to be placed on two
parallel lines.  In this $2$-layer model, the crossing
\emph{minimization} of a bipartite graph $G$ has been studied under
the name \emph{bipartite crossing number}, denoted $\bcr(G)$.

\begin{lemma}\label{lem:max-to-min}
  For every bipartite graph $G$, the maximum number of crossings among
  all separated convex drawings of $G$ is exactly $M(G) - \bcr(G)$.
\end{lemma}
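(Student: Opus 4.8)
The plan is to exploit the following dichotomy that holds in any separated convex (equivalently, $2$-layer) drawing of a bipartite graph $G$: an independent pair of edges either crosses or it does not, and once we fix the two linear orders of the color classes on the two parallel lines, \emph{every} independent pair of edges is determined to be one or the other. So for a fixed $2$-layer drawing $D$ I would let $\mathrm{cr}(D)$ denote its number of crossings and note that the edges contributing to $\mathrm{cr}(D)$ together with the non-crossing independent pairs partition the set of all independent pairs of edges. Hence $\mathrm{cr}(D) + \overline{\mathrm{cr}}(D) = M(G)$, where $\overline{\mathrm{cr}}(D)$ counts independent pairs that do \emph{not} cross in $D$. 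The key observation I would establish is that a pair of independent edges fails to cross in a $2$-layer drawing precisely when the pair is "nested or separated," which is exactly the condition counted by the bipartite crossing number when we think of the \emph{same} drawing but with crossings and non-crossings interchanged — more carefully, I would argue that the function $D \mapsto \overline{\mathrm{cr}}(D)$ ranges over exactly the same set of values as the bipartite crossing function, so that $\min_D \overline{\mathrm{cr}}(D) = \bcr(G)$.

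Concretely, the first step is to fix notation: place one color class $A$ on the top line in some order $\pi$ and the other class $B$ on the bottom line in order $\sigma$; a drawing in the separated convex model is thus encoded by the pair $(\pi,\sigma)$, and reversing one of the two orders is a bijection on the set of all such drawings. The second step is the local characterization: for two independent edges $a_1b_1$ and $a_2b_2$ with $a_1,a_2 \in A$ and $b_1,b_2 \in B$, they cross in $(\pi,\sigma)$ if and only if $a_1,a_2$ and $b_1,b_2$ appear in "opposite" relative orders (one order on the top line, the reversed order on the bottom line); they are non-crossing exactly when the relative orders agree. Reversing the bottom order $\sigma \mapsto \bar\sigma$ therefore turns every crossing pair into a non-crossing pair and vice versa, so $\overline{\mathrm{cr}}(\pi,\sigma) = \mathrm{cr}(\pi,\bar\sigma)$. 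The third step assembles these: maximizing crossings over separated convex drawings equals $\max_{(\pi,\sigma)} \mathrm{cr}(\pi,\sigma) = \max_{(\pi,\sigma)} \bigl(M(G) - \overline{\mathrm{cr}}(\pi,\sigma)\bigr) = M(G) - \min_{(\pi,\sigma)} \overline{\mathrm{cr}}(\pi,\sigma) = M(G) - \min_{(\pi,\sigma)} \mathrm{cr}(\pi,\bar\sigma) = M(G) - \bcr(G)$, where the last equality uses that $\sigma \mapsto \bar\sigma$ is a bijection and that $\bcr(G)$ is by definition the minimum of $\mathrm{cr}$ over all $2$-layer drawings.

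I expect the main (though still modest) obstacle to be making the "crossing versus non-crossing" dichotomy airtight, i.e., verifying that in the $2$-layer model there really is no third possibility for an independent pair of edges — that two independent segments between two parallel lines must either cross in the strip or be disjoint in the strip, with no degenerate touching case, given that vertices occupy distinct points. This is where one uses that incident edges are the only ones excluded from crossing and that a shared endpoint is impossible for an independent pair; a short argument about the relative left-right order of the four endpoints handles it. A secondary point requiring a sentence of care is that the bipartite crossing number $\bcr(G)$ is indeed defined as the minimum over \emph{all} pairs of linear orders $(\pi,\sigma)$ (not, say, over drawings with a fixed order on one side), so that the bijection $\sigma \mapsto \bar\sigma$ genuinely sweeps out the whole domain of the minimization; once that is pinned down, the displayed chain of equalities completes the proof.
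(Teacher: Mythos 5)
Your proposal is correct and follows essentially the same argument as the paper: the crossing/non-crossing status of an independent pair in a separated convex ($2$-layer) drawing is determined by the relative orders of its endpoints, and reversing the order of one side is an involution on drawings that exchanges crossing and non-crossing pairs, giving $\mathrm{cr}(D)+\mathrm{cr}(D')=M(G)$ and hence $\max = M(G)-\bcr(G)$. The paper phrases this via the circular order of a convex drawing rather than the pair $(\pi,\sigma)$ of linear orders, but the reversal bijection and the final optimization step are identical.
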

\begin{proof}
  Consider any separated convex drawing of any bipartite graph $G$.  A
  pair of independent edges is crossing if and only if their endpoints
  alternate along the convex hull.  So if $e_1 = u_1v_1$ and $e_2 =
  u_2v_2$ with $u_1,u_2$ being above the separating line $\ell$ and
  $v_1,v_2$ below, then $e_1$ and $e_2$ are crossing if in the
  circular order we see $u_1-u_2-v_1-v_2$, and non-crossing if we see
  $u_1-u_2-v_2-v_1$.  In particular, reversing the order of all
  vertices below the separating line $\ell$ transforms crossings into
  non-crossings and vice versa.  This shows that for a separated
  convex drawing with $k$ crossings, reversing results in exactly
  $M(G) - k$ crossings, which concludes the proof.
\end{proof}


Applying Lemma~\ref{lem:max-to-min} to the chain of
inequalities~\eqref{eq:four-parameters-new} shows that for every
bipartite graph~$G$ we have
\begin{equation}
  M(G)/2 \leq M(G) - \bcr(G) \leq \mccr(G) \leq \mrcr(G) \leq \mcr(G) \leq M(G).\label{eq:four-parameters-bipartite}
\end{equation}

It remains open whether the new inequality $M(G) - \bcr(G) \leq
\mccr(G)$ in~\eqref{eq:four-parameters-bipartite} is attained with
equality for every bipartite graph $G$.  For example, for a tree $G$
it is known, see e.g.~\cite{woodall71}, that $\mcr(G)=M(G)$, but it is
not hard to see that $\mrcr(G) = M(G)$ if and only if $G$ is a
caterpillar\footnote{A caterpillar is a tree in which all non-leaf
  vertices lie on a common path.}.  (Hence $\mrcr(G) < \mcr(G)$ holds
for every tree which is not a caterpillar.)  Moreover, it is equally
easy to see that a tree $G$ has a crossing-free $2$-layer drawing if
and only if $G$ is a caterpillar.  Thus, for every tree~$G$, we have
that $M(G) - \bcr(G) = M(G)$ if and only if $\mrcr(G) = M(G)$.  We
again refer to Fig.~\ref{fig:spider} for an illustration.

The expanded chain on
inequalities~\eqref{eq:four-parameters-bipartite}, leads to two
natural
questions:

\begin{problem}\label{c-bip1} \sloppy
  Does every bipartite graph $G$ have a separated drawing with
  $\mrcr(G)$ many crossings?  Does every tree $G$ have a separated
  convex drawing with $\mrcr(G)$ crossings, i.e., is $\mrcr(G) =
  M(G)-\bcr(G)$?
\end{problem}

Let us mention that Garey and Johnson~\cite{garey83} have shown that
bipartite crossing minimization is NP-hard.  The problem remains
NP-hard if the ordering of the vertices on one side is
prescribed~\cite{eades94}.  On trees, bipartite crossing minimization
can be solved efficiently~\cite{sssav01}.  For the one-sided two-layer
crossing minimization, Nagamochi~\cite{n-ib1sm-DCG05} gave an
$1.47$-approximation algorithm, improving upon the well-known median
heuristic, which yields a $3$-approximation~\cite{eades94}.  The
weighted case, which we define formally in
Section~\ref{sec:complexity}, admits a 3-approximation algorithm
\cite{ceks-cmwbg-JDA09}.

\section{Counterexamples for Conjecture~\ref{c-general}}
\label{sec:counterex}

In this section we present counterexamples for the convexity
conjecture. After some preliminary work we provide a counterexample
$H(4)$ on $37$ vertices.  To show that this graph is a counterexample,
we need to analyze only two cases.  (To show that $H(2)$ with $19$
vertices also is a counterexample would require more work.  Instead,
in Appendix~\ref{sec:small}, we prove that a certain planar subgraph
of $H(2)$ with only $12$ vertices and $16$ edges is already a
counterexample.)

A set of vertices $X\subset V$ in a graph $G=(V,E)$ is a set of
\emph{twins} if all vertices of $X$ have the same neighborhood in $G$
(in particular $X$ is an independent set). A \emph{vertex split} of
vertex $v$ in $G$ consists in adding a new vertex $v'$ to $G$ such
that $v'$ is a twin of $v$, that is, for any edge $vu$, there is an
edge $v'u$, and these are all the edges at~$v'$.

\begin{lemma}\label{lem:twins} 
  For any graph $G$ there is a convex drawing of $G$ maximizing the
  number of crossings among all convex drawings of $G$, such that each
  set of twins forms an interval of consecutive vertices along the
  convex hull of the drawing.
\end{lemma}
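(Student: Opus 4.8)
The plan is to start from an arbitrary crossing-maximizing convex drawing of $G$ and show that if some set $X$ of twins is not an interval along the convex hull, we can rearrange the vertices so that $X$ becomes an interval without decreasing the number of crossings. Iterating this over all twin-sets (or applying it to one set at a time and arguing it can be done simultaneously) then yields the claim. Since the number of crossings in a convex drawing depends only on the circular order of the vertices, we are really proving a purely combinatorial statement about circular orders.

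First I would fix a crossing-maximizing circular order $\pi$ and pick a twin-set $X$ whose elements are not consecutive in $\pi$. Let $x \in X$ be a vertex chosen so that the arc from $x$ to the ``next'' element of $X$ (in one direction) is as short as possible but nonempty; call the set $Y$ of non-$X$ vertices strictly between them a \emph{gap block}. The key step is to compare two operations: moving the block $Y$ to just before $x$, versus moving $x$ to just after $Y$ (equivalently, sliding $x$ across $Y$). For any single vertex $y \in Y$ and the vertex $x$, consider how swapping their relative order affects crossings: only edge pairs $(e,f)$ with $e$ incident to $x$ and $f$ incident to $y$ can change status, and one checks that the net change when moving $x$ past the whole block $Y$ is $+C$ in one direction and $-C$ in the other, for some quantity $C$ depending on the edges at $x$, the edges at $Y$, and the positions of their other endpoints. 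Hence one of the two moves does not decrease the crossing count, and it shrinks the ``defect'' of $X$ (the number of non-$X$ vertices lying strictly between the two extreme occurrences of $X$ along some arc), or merges two runs of $X$.

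The main obstacle will be bookkeeping the interaction between different twin-sets: when we make $X$ an interval, we must ensure we do not destroy the interval property of a twin-set we fixed earlier. I would handle this by choosing the block $Y$ to move to consist, as much as possible, of whole already-formed twin-intervals plus loose vertices, and by ordering the twin-sets so that we always move the ``smaller'' structure; alternatively, one can argue by a potential-function / extremal-choice argument — among all crossing-maximizing orders, take one minimizing $\sum_X (\text{defect of }X)$, and show it must be $0$. A secondary technical point is that twins are pairwise non-adjacent and share a neighborhood, so edges incident to two different vertices of $X$ never cross each other and behave identically with respect to any third edge; this is exactly what makes the ``slide'' argument symmetric and is where the twin hypothesis is used. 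Finally, one should note the statement only claims existence of \emph{one} maximizing drawing with all twin-sets being intervals, so it suffices to reach such a configuration by a finite sequence of non-decreasing moves, each strictly decreasing the overall potential.
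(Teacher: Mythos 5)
There is a genuine gap at the heart of your argument. The two operations you compare --- ``moving the block $Y$ to just before $x$'' and ``moving $x$ to just after $Y$'' --- produce the \emph{same} circular order (in both cases the order becomes $\dots,Y,x,x',\dots$), and since the crossing count of a convex drawing depends only on the circular order, they cannot have changes $+C$ and $-C$; your dichotomy is vacuous. What you would need is a second, genuinely different move whose effect is provably the negative (or at least of opposite sign) of the first, and no such move is identified: sliding $x$ across $Y$ changes the count by some $\Delta$ that can simply be negative, and ``undoing'' it is the only move with change $-\Delta$. Consequently the claim that the defect of $X$ can always be reduced without losing crossings is unproven, and the fallback ``take a maximizing order minimizing $\sum_X(\text{defect})$ and show the defect is $0$'' is only stated, not argued. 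A secondary error: it is not true that edges incident to two different twins $u,v\in X$ never cross --- for common neighbors $w,w'$ the independent edges $uw$ and $vw'$ can and (in an optimal drawing) should cross; these are exactly the crossings that consecutive placement of $X$ is needed to maximize, so dismissing them removes the part of the count where the twin hypothesis does real work.

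For comparison, the paper argues globally rather than by local swaps. Fix a maximal twin class $V_i$ and classify each crossing by the number $k\in\{0,1,2\}$ of its four endpoints lying in $V_i$. Moving only vertices of $V_i$ leaves $0$-rich crossings untouched. For $2$-rich crossings, each is carried by a $4$-cycle $u\,w\,v\,w'$ with $u,v\in V_i$, each such $4$-cycle carries at most one crossing, and when $V_i$ is consecutive every such $4$-cycle is self-crossing, so the $2$-rich count is maximized. For $1$-rich crossings, the number involving a given vertex of $V_i$ depends only on that vertex's position relative to $V\setminus V_i$; choosing $v\in V_i$ with the most $1$-rich crossings and placing all other members of $V_i$ immediately next to $v$ gives each of them at least as many $1$-rich crossings as before. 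One pass per twin class (which never breaks intervals already formed, since only $V_i$-vertices move and they land next to a $V_i$-vertex) finishes the proof. If you want to salvage a local-exchange proof, you would need an argument of at least this strength for why some non-decreasing move always exists; the $\pm C$ symmetry you posit does not supply it.
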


\begin{proof}
  Suppose $V_1,\dots,V_s$ are the maximal sets of twins in~$G$.
  Consider a convex drawing of~$G$ maximizing the number of
  crossings. It clearly suffices to show that for any set~$V_i$ we may
  move all the points of~$V_i$ next to one of the points of~$V_i$
  without decreasing the number of crossings, since this procedure
  done iteratively $s$ times, once for each of the sets
  $V_1,\dots,V_s$, results in a desired convex drawing of~$G$.

  We call a crossing \emph{$k$-rich} if there are $k$ vertices of
  $V_i$ among the four vertices of the edges forming the
  crossing. Since $V_i$ is independent, $k$ is $0$, $1$ or $2$ for
  each crossing. If we move only vertices of $V_i$ then $0$-rich
  crossings remain in the drawing.  If the vertices of $V_i$ appear in
  consecutive order along the convex hull of the drawing then the
  number of $2$-rich crossings is maximized due to the following
  argument. For any two vertices $u,v$ of $V_i$ and for any two
  neighbors $x,y$ of $V_i$, the $4$-cycle $uxvy$ is self-crossing
  which gives rise to a $2$-rich crossing. Since every $2$-rich
  crossing appears in a single $4$-cycle and every $4$-cycle can give
  rise to at most one crossing, the number of $2$-rich crossings is
  indeed maximized whenever the vertices of $V_i$ appear in
  consecutive order along the convex hull. It remains to show that
  there is a vertex $v$ in $V_i$ such that we can move the other
  vertices next to $v$ without decreasing the number of $1$-rich
  crossings. Each $1$-rich crossing involves exactly one vertex of
  $V_i$. The number of $1$-rich crossings involving a given vertex of
  $V_i$ is affected only by the position of that vertex and of the
  vertices of $V\setminus V_i$. Thus, if we choose $v$ as the vertex
  involved in the largest number of $1$-rich crossings and move all
  the other vertices of $V_i$ next to $v$, every vertex will be
  involved in at least as many $1$-rich crossings as it was before the
  vertices were moved.
\end{proof}

\paragraph{The construction of $H(k)$.}

For the construction of our example graphs $H(k)$, we start with a
9-cycle on vertices $v_0,\ldots,v_8$ with edges $v_i,v_{i+1}$ where
$i+1$ is to be taken modulo 9.  Add a `central' vertex $z$ adjacent to
$v_0,v_3,v_6$. This graph on 10 vertices is the base graph $H$.  The
example graph $H(k)$ is obtained from $H$ by applying $k$ vertex
splits to each of the nine cycle vertices~$v_i$. The graph $H(k)$ thus
consists of nine independent sets $V_i$ of size $k$ and the central
vertex~$z$.  In total it has $9k+1$ vertices and $9k^2+3k$ edges.
Figure~\ref{fig:example} (left) shows a schematic drawing of $H(k)$,
where each black edge represents a ``bundle'' of $k^2$ edges of $H(k)$
and each gray edge represents a set of $k$ edges.  We will show that
for $k\geq 4$ the drawing in Fig.~\ref{fig:example} (right) has more
crossings than any drawing with vertices in convex position.

\begin{figure}[tb]
  \centering
  \includegraphics{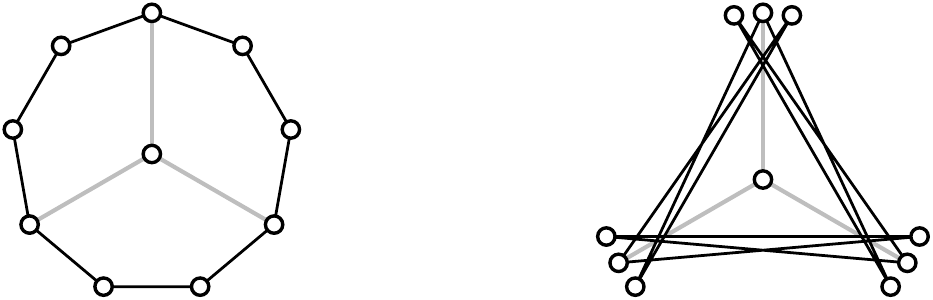}
  \caption{Left: The graph $H(k)$. Each circle represents $k$
    independent vertices, each black line segment represents a bundle
    of $k^2$ edges, each gray line segment represents $k$
    edges. Right: A non-convex drawing of $H(k)$.}
  \label{fig:example}
\end{figure}


From Lemma~\ref{lem:twins} we know that, in convex drawings of $H(k)$
with many crossings, the twin pairs of vertices can be assumed to be
next to each other.  Drawings of $H(k)$ of this kind are essentially
determined by the corresponding drawings of $H$, in which each set of
twins is represented just by one representative; see
Fig.~\ref{fig:example}. This justifies that later on we only look at
convex drawings of~$H$ with weighted crossings, and not of the full
$H(k)$.

An independent set of edges of $H(k)$ is \emph{weak} if the
corresponding edges in the base graph~$H$ are not independent; it is
\emph{strong} otherwise. The next lemma shows that our drawing
of~$H(k)$ realizes as many crossings on weak pairs of independent
edges as possible. This allows us to focus on strong pairs in the
subsequent analysis.

\begin{lemma}
  The drawing of $H(k)$ on the right side of Fig.~\ref{fig:example}
  maximizes the number of crossings on weak pairs of independent
  edges.
\end{lemma}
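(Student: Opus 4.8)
The plan is to classify the weak pairs of independent edges of $H(k)$ according to their images in the base graph $H$, and to argue that our drawing realizes every weak crossing that is geometrically possible. Recall that a pair $\{e_1,e_2\}$ of independent edges of $H(k)$ is \emph{weak} precisely when the corresponding edges of $H$ share an endpoint (or coincide). Since $H$ consists of the $9$-cycle $v_0\cdots v_8$ together with the spokes $zv_0, zv_3, zv_6$, two edges of $H$ sharing an endpoint fall into one of a small number of types: two consecutive cycle edges $v_{i-1}v_i$ and $v_iv_{i+1}$; a cycle edge and an incident spoke, e.g.\ $v_{i-1}v_i$ and $zv_i$ when $i\in\{0,3,6\}$; two spokes $zv_i$ and $zv_j$ with $i,j\in\{0,3,6\}$; or the degenerate case where both edges of $H(k)$ map to the \emph{same} edge of $H$, i.e.\ they lie in a common bundle between two twin-classes $V_i$ and $V_{i+1}$. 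For each type I will count, using the structure of the drawing in Fig.~\ref{fig:example} (right), how many crossings among the corresponding $H(k)$-edges it contributes, and separately compute the maximum conceivable number of such crossings; matching the two numbers for every type finishes the proof.

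First I would dispose of the same-bundle case. If $e_1=a b$ and $e_2=a' b'$ with $a,a'\in V_i$ and $b,b'\in V_{i+1}$ and $\{a,b\}\cap\{a',b'\}=\emptyset$, then $e_1,e_2$ cross iff the four endpoints alternate. Within a single bundle drawn between two consecutive intervals of $k$ collinear-in-convex-position points, this is exactly the $2$-layer situation of Lemma~\ref{lem:max-to-min}: among all $\binom{k}{2}^{?}$ independent pairs inside one bundle the number of crossings is maximized by reversing one side, and our drawing is set up (as in Fig.~\ref{fig:example}, right, where each bundle is drawn with the two twin-intervals in opposite orientation) to realize this maximum. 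So each bundle contributes its full quota of same-bundle weak crossings. Next, for the "two incident edges of $H$" cases I would use the following general observation: if $e_1$ uses a vertex of $V_i$ and a vertex of some class $V_j$, and $e_2$ uses a (different) vertex of $V_i$ and a vertex of $V_\ell$, then whether $e_1$ and $e_2$ cross depends only on the cyclic positions of the four endpoints, hence only on how the intervals $V_i, V_j, V_\ell$ are arranged on the convex hull and on the internal orders of $V_i$ (and of $V_j=V_\ell$ if they coincide). I would check, interval by interval around the drawing of Fig.~\ref{fig:example} (right), that in each such configuration the number of crossing pairs equals the combinatorial maximum $k\cdot\binom{k}{2}$ or $\binom{k}{2}\cdot\binom{k}{2}$ or $k^2\binom{k}{2}$, as appropriate — i.e.\ every pair that could possibly alternate does alternate.

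The main obstacle will be bookkeeping: there are several incidence types in $H$ (consecutive cycle edges; cycle edge meeting a spoke at $v_0,v_3,v_6$; two of the three spokes at $z$), and for each one I must pin down, from the explicit geometry of Fig.~\ref{fig:example} (right), the cyclic order of the relevant intervals and verify that the internal orders chosen for the twin-intervals simultaneously maximize all the weak-crossing counts in which they participate. The potential worry is a conflict — an internal order of some $V_i$ that is forced one way to maximize same-bundle crossings but the other way to maximize, say, crossings of the spoke $zv_i$ against the incident cycle edges. I expect this not to happen, because the spokes emanate from the single vertex $z$ and (for a fixed position of $z$ and of the other interval) the alternation condition for a $1$-rich weak pair, as in the argument of Lemma~\ref{lem:twins}, is insensitive to the internal order of $V_i$ up to the choice already fixed; the clean way to see it is to note that reversing $V_i$ turns same-bundle crossings on the two bundles $V_{i-1}V_i$ and $V_iV_{i+1}$ into non-crossings in a \emph{complementary} way while leaving $1$-rich counts against $z$ unchanged, so the drawing can be (and in Fig.~\ref{fig:example} is) arranged to hit every maximum at once. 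Once this case analysis is complete, summing the per-type maxima gives the total number of weak crossings attained by the drawing, which by construction equals the absolute maximum, proving the lemma.
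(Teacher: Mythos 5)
Your decomposition of the weak pairs by their type in the base graph $H$ (same bundle; two consecutive cycle edges; cycle edge plus incident spoke) is exactly the paper's, and your target counts $\binom{k}{2}^2$, $k^2\binom{k}{2}$ and $k\binom{k}{2}$ per configuration are the right ones (pairs of spoke edges contribute nothing, since all spoke edges share $z$ and are therefore not independent in $H(k)$ -- worth saying explicitly). However, you assert these numbers are the ``combinatorial maximum'' without justification. The justification is that every weak pair lies in a $4$-cycle of $H(k)$ (for two edges of one bundle the $4$-cycle $ab a'b'$; for edges of two incident bundles the $4$-cycle through the two vertices of the shared class), a $4$-cycle carries at most one crossing in any admissible drawing, and the quoted numbers are exactly the numbers of such $4$-cycles; equivalently, they are $\mrcr(K_{k,k})$, $\mrcr(K_{k,2k})$ and $\mrcr(K_{k,k+1})$, which is the form in which the paper invokes them.

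The more substantive gap is that you leave the decisive step -- that the drawing attains all these maxima \emph{simultaneously} -- to an unexecuted ``interval by interval'' check, and the heuristic you offer for why no conflict of internal orders can arise is wrong as stated: reversing the internal order of $V_i$ does not trade same-bundle crossings between the bundles $V_{i-1}V_i$ and $V_iV_{i+1}$ in any complementary way; it merely swaps, within each $4$-cycle, which of its two independent pairs crosses, and leaves every count unchanged. That observation is in fact what makes all the bookkeeping unnecessary, and it is the paper's actual argument: in the drawing of Fig.~\ref{fig:example} (right), $V_i$ is in separated convex position with respect to $V_{i-1}\cup V_{i+1}$, and for $i\in\{0,3,6\}$ also with respect to $V_{i+1}\cup\{z\}$ and $V_{i-1}\cup\{z\}$; for point sets in separated convex position, \emph{every} $4$-cycle of the corresponding complete bipartite graph has exactly one crossing, irrespective of the internal orders of the classes. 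Hence each type of weak pair meets its $4$-cycle upper bound at once, no choice of orderings can conflict, and the lemma follows. Once you replace your deferred case analysis and the flawed reversal argument by this order-independence fact, your proof closes and coincides with the one in the paper.
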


\begin{proof}
  Each edge $v_i,v_{i+1}$ of $H$ maps to a $K_{k,k}$ in $H(k)$. In the
  given drawing the $K_{k,k}$ is represented by a red edge. Since $V_i
  \cup V_{i+1}$ are in separated convex position the $K_{k,k}$
  contributes $\binom{k}{2}^2$ crossing.

  A pair of adjacent edges $v_{i-1},v_{i}$ and $v_i,v_{i+1}$ in $H$
  maps to a $K_{k,2k}$ in $H(k)$. We know that $\mrcr(K_{k,2k}) =
  \binom{k}{2}\binom{2k}{2}$ and this number of crossings is realized
  with separated convex position. In the drawing $V_i$ and
  $V_{i-1}\cup V_{i+1}$ are in separated convex position.

  A pair of adjacent edges $v_{i},z$ and $v_i,v_{i+1}$ in $H$ maps to
  a $K_{k,k+1}$ in $H(k)$.  Now we have $\mrcr(K_{k,k+1}) =
  \binom{k}{2}\binom{k+1}{2}$, and this number of crossings is realized
  with separated convex position of the vertices. In the drawing
  $V_i,V_{i+1}\cup \{z\}$ are in separated convex position. The case
  of adjacent edges $v_{i},z$ and $v_{i-1},v_{i}$ is identical.
\end{proof}

The remaining crossings of the drawing of $H(k)$ correspond to
crossings of two independent edges of $H$. These are either two red
edges or a red and a green edge of $H$. Red edges represent a bundle
of $k^2$ edges of $H(k)$ and green edges a bundle of $k$ edges of
$H(k)$. Hence a crossing of two red edges represents $k^4$ individual
crossing pairs and a crossing of a red and a green edge represent
$k^3$ individual crossing pairs.  We devide by $k^3$ and speak about a
crossing of two red edges as a crossing of weight $k$ and of a red
green crossing as a crossing of weight 1.  In the given drawing of
$H(k)$ every pair of red edges is crossing but every red edge has a
unique independent green edge which is not crossed. Hence, the weight
of the independent not crossing pairs of edges of~$H$ is~9.  We
summarize by saying that the given drawing has a weighted loss of~9.

\paragraph{The loss of convex drawings.}

We now study the weighted loss of convex drawings of $H$.  In a convex
drawing every red edge splits the 7 non-incident cycle vertices into
those on one side and those on the other side. The \emph{span} of a
red edge is the number of vertices on the smaller side.  Hence, the
span of an edge is one of 0, 1, 2, 3.
 
Let us consider the case where the 9-cycle is drawn with zero loss,
i.e., each red edge has span~3 and contributes a crossing with 6 other
red edges. The cyclic order of the cycle vertices is
$v_0,v_2,v_4,v_6,v_8,v_1,v_3,v_5,v_7$. Any two neighbors of $z$ have
the same distance in this cyclic order. Therefore, we may assume that
$z$ is in the short interval spanned by $v_0$ and $v_6$.  Every edge
of the 9-cycle is disjoint from at least one of the two green edges
$z,v_0$ and $z,v_6$ and the edge $v_7,v_8$ is disjoint from both. This
shows that the weighted loss of this drawing is at least~10.

A sequence of eight consecutive edges of span~3 forces the last edge
to also have span 3. Hence, we have at least two red edges~$e$ and~$f$
of span at most~2.  Each of these edges is disjoint from at least two
independent red edges. Since the two edges may be disjoint they
contribute a weighted loss of at least~$3k$. For $k>4$ this exceeds
the weighted loss of the drawing of Fig.~\ref{fig:example}.

\section{Complexity}
\label{sec:complexity}

Very recently, Bald et al.~\cite{Bald16} showed, by reduction from
\textsc{MaxCut}, that it is NP-hard to compute the maximum rectilinear
crossing number $\mrcr(G)$ of a given graph~$G$.  Their reduction also
shows that it is hard to approximate the weighted case better than
$\approx0.878$ assuming the Unique Games Conjecture and better than
$16/17$ assuming ${\cal P} \ne {\cal NP}$.  In the convex case, one
can ``guess'' the permutation; hence, this special case is in $\cal
NP$.  Bald et al.\ also stated that rectlinear crossing maximization
is similar to rectilinear crossing minimization in the sense that the
former ``inherits'' the membership in the class of the existential
theory of the reals (\ER), and hence in PSPACE, from the latter.  They
also showed how to derandomize Verbitsky's approximation
algorithm~\cite{verbitsky2008obfuscation} for $\mrcr$, turning the
expected approximation ratio of~$1/3$ into a deterministic one.

We now tighten the hardness results of Bald et al.\ by showing
APX-hardness for the \emph{unweighted} case.  Recall that
\textsc{MaxCut} is NP-hard to approximate beyond a factor of
$16/17$~\cite{h-soir-JACM01}.  Under the Unique Games Conjecture,
\textsc{MaxCut} is hard to approximate even beyond a factor of
$\approx\!0.878$~\cite{kkmo-oirm-SICOMP07}---the approximation ratio of
the famous semidefinite programming approach of Goemans and Williamson
\cite{gw-iaamc-JACM95} for \textsc{MaxCut}.  For a graph~$G$, let
$\mcut(G)$ be the maximum number of edges crossing a cut, over all
cuts of~$G$.

\begin{theorem}
  \label{thm:unweighted-rectilinear}
  Given a graph~$G$, $\mrcr(G)$ cannot be approximated better than
  \textsc{MaxCut}.
\end{theorem}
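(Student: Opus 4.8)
The plan is to give an approximation-preserving reduction from \textsc{MaxCut} to \emph{unweighted} rectilinear crossing maximization, so that the known inapproximability ratios for \textsc{MaxCut} transfer verbatim. Since Bald et al.~\cite{Bald16} already reduce \textsc{MaxCut} to the \emph{weighted} problem in a ratio-preserving way (their reduction is what yields the $\approx\!0.878$ and $16/17$ bounds for weighted $\mrcr$), it suffices to reduce the weighted problem to the unweighted one without loss in the ratio; concretely, I would redo their construction with an extra ``blow-up'' layer. So, starting from a \textsc{MaxCut} instance $G=(V,E)$, take the weighted graph $G^w$ produced by Bald et al., together with explicit constants $\alpha,\beta$ such that a cut of $G$ of value $k$ corresponds to a weighted drawing of $G^w$ with $\alpha k+\beta$ crossings and, conversely, any drawing of $G^w$ with $c$ crossings yields a cut of value at least $(c-\beta)/\alpha$. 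Replace $G^w$ by an unweighted graph $G'$ in the standard way: each vertex becomes a set of mutually twin vertices, each weighted edge becomes a complete bipartite ``bundle'' joining the corresponding twin classes, with class sizes chosen (after, if necessary, first substituting each weighted edge by a constant-size gadget so that weights become products of vertex multiplicities) so that one transversal crossing of two bundles counts, up to a fixed global factor $t$, the same as the corresponding weighted crossing.

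The crossings of $G'$ then split — exactly as in the analysis of $H(k)$ in Section~\ref{sec:counterex} — into \emph{weak} pairs, coming from bundles of edges that are adjacent or equal in $G^w$, and \emph{strong} pairs. The weak pairs assemble into complete bipartite pieces ($K_{a,b}$, $K_{a,b+c}$, and so on) whose maximum rectilinear crossing numbers are known and, crucially, are \emph{simultaneously} attained by routing each twin class as a tiny cluster with each incident bundle locally in separated convex position (Lemma~\ref{lem:max-to-min}); hence the weak pairs contribute a fixed, efficiently computable amount $\gamma$ that is both an upper bound and realizable. In such a clustered drawing the strong pairs reproduce the weighted crossings of $G^w$, scaled by the fixed factor $t^4$. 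Consequently a maximum cut of $G$ gives a drawing of $G'$ with $\gamma+t^4(\alpha\cdot\mcut(G)+\beta)$ crossings — this is the completeness direction, and it is routine.

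Soundness is where the real work lies, and I expect it to be the main obstacle. Given an arbitrary rectilinear drawing of $G'$ with $c$ crossings, one must extract a cut of $G$ of value roughly $(c-\gamma-t^4\beta)/(t^4\alpha)$. Unlike the convex case, there is no analogue of Lemma~\ref{lem:twins}: one cannot simply slide a twin class together in a rectilinear drawing without destroying crossings, so there is no direct clustering step. Instead I would argue combinatorially — bound the contribution of every weak bundle-group by its known maximum $\gamma$, and then, by an averaging argument over the choice of one representative vertex per twin class, select a sub-drawing that is a weighted drawing of $G^w$ carrying at least $(c-\gamma)/t^4$ weighted crossings, with any systematic loss in the averaging being only lower-order and absorbable by taking $t$ large; finally apply Bald et al.'s drawing-to-cut map. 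The remaining point is quantitative bookkeeping: the offset $\gamma+t^4\beta$ is computable, and since $\mcut(G)\ge|E|/2$ always while the construction blows up by the fixed factor $t^4$, subtracting the offset turns a $\rho$-approximation for $\mrcr$ into a $(\rho-o(1))$-approximation for \textsc{MaxCut}, the $o(1)$ vanishing as $t\to\infty$. So the crux is the rectilinear normalization/averaging in the soundness direction together with checking that the additive offset is negligible; the rest follows from the cited lemmas and from Bald et al.'s reduction.
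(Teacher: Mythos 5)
There is a genuine gap, and it sits exactly where you flag it: the soundness step is not just ``the main obstacle'', it is missing, and the error estimate you lean on is wrong in order of magnitude. In your blow-up, if each twin class has size $s$, then every strong crossing of the base graph is worth $\Theta(s^4)$ unweighted crossings --- but so is every \emph{weak} group (e.g.\ $\mrcr(K_{s,2s})=\binom{s}{2}\binom{2s}{2}=\Theta(s^4)$). Hence any discrepancy between your additive offset $\gamma$ (defined as the sum of per-group maxima, which is a valid upper bound in an arbitrary drawing) and what is \emph{simultaneously realizable} in one straight-line drawing is of the same order as a single unit of cut value, and cannot be ``absorbed by taking $t$ large''. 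The claim that $\gamma$ is simultaneously attainable by clustering the twin classes is itself unsubstantiated: when a class has three or more neighbouring clusters in spread directions, the local orderings required to maximize the different overlapping weak groups conflict (four points with two of them inside a wedge give a $4$-cycle with no crossing), so the clustered drawing need not achieve all weak maxima; this is precisely the kind of interaction that makes the $H(k)$ analysis in Section~\ref{sec:counterex} require a graph-specific verification, and there is no analogue of Lemma~\ref{lem:twins} in the rectilinear setting to rescue you, as you yourself note. Without a proof that the additive offset is both efficiently computable and realizable \emph{exactly} (or up to less than one scaled cut unit), the reduction does not preserve the approximation ratio.

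The paper avoids all of this with a much simpler device, and it is worth internalizing: instead of simulating Bald et al.'s weighted heavy edge by a blown-up bundle, it adds an unweighted star $S_t$ with $t=\binom{m}{2}+1$ edges. The star's edges share an endpoint, so they can never cross each other; by the same ``redraw a poorer edge alongside a better one'' exchange argument used for the heavy edge, in an optimal drawing every star edge is crossed by the same, maximum-cut-inducing set of edges of~$G$. Since $\mrcr(G)<t$, one gets the sandwich $t\cdot\mcut(G)\le\mrcr(G+S_t)<t\,(\mcut(G)+1)$, hence $\mcut(G)=\lfloor\mrcr(G+S_t)/t\rfloor$ exactly, and any $\alpha$-approximation for $\mrcr$ transfers to \textsc{MaxCut}. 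In short: the ``unweighting'' is done by a gadget whose edges are structurally prevented from crossing one another, which removes the need for any weak/strong bookkeeping, any clustering lemma, and any averaging over representatives --- the steps your proposal leaves unproven.
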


\begin{proof}
  As Bald et al., we reduce from \textsc{MaxCut}.  In their reduction,
  they add a large-enough set~$I$ of independent edges to the given
  graph~$G$.  They argue that $\mrcr(G+I)$ is maximized if the edges
  in~$I$ behave like a single edge with high weight that is crossed by
  as many edges of~$G$ as possible.  Indeed, suppose for a
  contradiction that, in a drawing with the maximum number of
  crossings, an edge~$e\in I$ crosses fewer edges than another
  edge~$e'$ in~$I$.  Then $e$ can be drawn such that its endpoints are
  so close to the endpoints of~$e'$ that both edges cross the same
  edges---and each other.  This would increase the number of
  crossings; a contradiction.  W.l.o.g., we can make the ``heavy
  edge'' so long that its endpoints lie on the convex hull of the
  drawing.  This means that the heavy edge induces a cut of~$G$.  The
  cut is maximum since the heavy edge can be made arbitrarily heavy.

  Instead of adding a set~$I$ of independent edges to~$G$, we add a
  star~$S_t$ with $t=\binom{m}{2}+1$ edges, where $m=|E(G)|$.  Then,
  $\mrcr(G)<t$.  The advantage of the star is that all its edges are
  incident to the same vertex and, hence, cannot cross each other.
  Let~$G'=G+S_t$ be the resulting graph.  Exactly as for the set~$I$
  above, we argue that all edges of~$S_t$ must be crossed by the same
  number of edges of~$G$, and must in fact form a cut of~$G$.  Hence,
  we get
  \[t \cdot \mcut(G) \le \mrcr(G') \le t\cdot\mcut(G)+\mrcr(G) < t
  \cdot (\mcut(G)+1).\] This yields $\mcut(G) = \lfloor \mrcr(G')/t
  \rfloor$.  Hence, any $\alpha$-approximation for maximum rectilinear
  crossing number yields an $\alpha$-approximation for
  \textsc{MaxCut}.
\end{proof}

With the same argument, we also obtain hardness of approximation for
$\mccr$, which was only shown NP-hard by Bald et al.~\cite{Bald16}.
The reason is that in the convex setting, too, the ``heavy obstacle''
splits the vertex set into a ``left'' and a ``right'' side.

\begin{corollary}
  Given a graph~$G$, $\mccr(G)$ cannot be approximated better than
  \textsc{MaxCut}.
\end{corollary}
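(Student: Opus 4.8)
The plan is to mirror the proof of Theorem~\ref{thm:unweighted-rectilinear} almost verbatim, replacing the rectilinear setting by the convex one throughout. Given a graph~$G$ with $m=|E(G)|$, I would form $G'=G+S_t$ with $t=\binom{m}{2}+1$, exactly as before, and argue that $\mccr(G) = \lfloor \mccr(G')/t \rfloor$; the same chain of inequalities then shows that an $\alpha$-approximation for $\mccr$ yields an $\alpha$-approximation for \textsc{MaxCut}, which is what the corollary claims.

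The key steps are as follows. First, since $S_t$ is a star, its $t$ edges are mutually incident and therefore pairwise non-crossing in any convex drawing, so $\mccr(G') \le t\cdot(\text{crossings each star edge can collect}) + \mccr(G)$, and certainly $\mccr(G) < t$ since $\mccr(G) \le M(G) \le \binom m2 < t$. Second, I would establish that in a convex drawing of $G'$ maximizing crossings, all $t$ star edges are crossed by exactly the same set of edges of $G$. The argument is the one sketched in the paragraph before the corollary: if star edge $e$ crosses fewer edges of $G$ than star edge $e'$, we may move $e$'s non-center endpoint to sit immediately next to $e'$'s non-center endpoint on the circle; then $e$ becomes ``parallel'' to $e'$ and crosses precisely the edges $e'$ crosses, strictly increasing the total count (the center vertex is shared, so $e$ and $e'$ still do not cross each other, but every $G$-edge crossed by $e'$ is now also crossed by $e$)---contradicting maximality. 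Third, once all star edges cross the same set $F\subseteq E(G)$ of edges, that common set is exactly the set of edges of $G$ whose two endpoints lie on opposite sides of the ``chord'' determined by the star: the non-center endpoints of the star edges form one interval $A$ on the circle, and an edge of $G$ is crossed by all star edges iff it has one endpoint among the vertices strictly enclosed by that interval and one outside---in other words $F$ is the edge set of a cut $(B, V(G)\setminus B)$ of $G$, where $B$ is the set of $G$-vertices falling in the arc spanned by $A$. Finally, since $t$ can be taken arbitrarily large relative to $\mccr(G)$, the maximizing drawing must choose this cut to be a \emph{maximum} cut, giving $t\cdot\mcut(G) \le \mccr(G') \le t\cdot\mcut(G) + \mccr(G) < t\cdot(\mcut(G)+1)$, hence $\mcut(G)=\lfloor\mccr(G')/t\rfloor$.

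The main obstacle, and the only place where the convex case genuinely differs from the rectilinear one, is the third step: verifying that ``being crossed by every star edge'' in the convex model is equivalent to ``separating the star-interval from its complement,'' i.e.\ that the common crossing set $F$ really is a cut. In the rectilinear case Bald et al.\ get this from the heavy edge being a long straight segment reaching the convex hull; here I have to reason purely combinatorially about the circular order. Concretely, I would place the center $c$ of the star anywhere on the circle, observe that pushing all non-center star endpoints arbitrarily close together makes them an interval $A$ adjacent to $c$ (this does not decrease crossings, again by the move-next-to argument), and then note that a chord $pq$ of $G$ crosses the chord from $c$ to a point $a\in A$ iff exactly one of $p,q$ lies strictly between $a$ and $c$ on the circle; taking $A$ short, ``strictly between $a$ and $c$'' is the same arc for all $a\in A$, so the edges crossed by all of $A$ are exactly those with exactly one endpoint in that arc. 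That arc's $G$-vertices form the shore $B$ of the induced cut, completing the argument.
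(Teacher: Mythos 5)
Your proposal is correct and takes essentially the same route as the paper, whose proof of the corollary is just the remark that the star reduction of Theorem~\ref{thm:unweighted-rectilinear} carries over because in the convex setting the heavy obstacle again splits the vertex set into two sides---precisely the cut structure you verify combinatorially via the circular order. One small slip worth fixing: consolidating the star leaves yields a consecutive interval next to the leaf involved in the most crossings, not an interval \emph{adjacent to the center}~$c$ (if the leaves sat directly next to~$c$, the star edges would cross nothing); luckily your later argument never uses that adjacency, only that the interval contains no vertices of~$G$, so the proof stands once that phrase is dropped.
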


Next we consider the \emph{weighted} topological case, which is
formally is defined as follows.  For a graph~$G$ with positive edge
weights $w \colon E \to \mathbb{Q}_{>0}$ and a drawing~$D$ of~$G$, let
$\mwtcr(D)=\sum_{e \text{ crosses } e'} w(e) \cdot w(e')$ be the
weighted maximum crossing number of~$D$, and let $\mwtcr(G)$ be the
maximum $\mwtcr(D)$ over all drawings $D$ of $G$ be the weighted
maximum crossing number of~$G$.  Let \textsc{MaxWtCrNmb} be the
problem of computing the weighted maximum crossing number of a given
graph.

Compared to the rectilinear and the convex case above, the difficulty
of the topological case is that an obstacle (such as the heavy star
above) does not necessarily separate the vertices into ``left'' and a
``right'' groups any more.  Instead, our new obstacle separates the
vertices into an ``inner'' group and an ``outer'' group, which allows
us to reduce from a cut-based problem.

Our new starting point is the NP-hard problem
\textsc{3MaxCut}~\cite{y-nednp-STOC78}, which is the special case of
\textsc{MaxCut} where the input graph is required to be 3-regular.

\begin{theorem}
  \label{thm:weighted-topological}
  Given an edge-weighted graph~$G$, computing $\mwtcr(G)$ is
  NP-complete.
\end{theorem}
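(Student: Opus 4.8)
The plan is to reduce from \textsc{3MaxCut}. The idea is to append to the input graph one very ``heavy'' obstacle, a triangle~$T$ whose three edges all get a large weight~$W$. A triangle is used because its three edges are pairwise incident, hence pairwise non-crossing, so in every drawing under consideration $T$ is a simple closed Jordan curve~$\gamma$; by the Jordan curve theorem $\gamma$ partitions the remaining vertices into an ``inside'' and an ``outside'' set, i.e., into a cut of the input graph. An edge of the input graph whose endpoints are separated by~$\gamma$ crosses~$\gamma$ an odd number of times and may cross it three times (once per edge of~$T$); a non-separated edge crosses~$\gamma$ an even number of times and at most twice. So ``cut'' edges can pay one extra unit of weight~$W$ each, and taking~$W$ larger than everything else forces a maximum drawing to realize a maximum cut.

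Concretely, let~$G$ be the given cubic graph with $m:=|E(G)|$, give every edge of~$G$ unit weight, set $W:=M(G)+1$, and let~$G'$ be~$G$ together with a disjoint triangle~$T$ of edge weight~$W$. The core of the proof is the estimate
\begin{equation*}
  W\bigl(2m+\mcut(G)\bigr)\ \le\ \mwtcr(G')\ \le\ W\bigl(2m+\mcut(G)\bigr)+M(G);
\end{equation*}
since $M(G)<W$ it gives $\mcut(G)=\bigl\lfloor\mwtcr(G')/W\bigr\rfloor-2m$, and as~$W$ is polynomial in the size of~$G$ this is a polynomial-time reduction, proving \textsc{MaxWtCrNmb} NP-hard. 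For the upper bound I would fix any drawing of~$G'$: the triangle is a simple closed curve~$\gamma$ that no vertex of~$G$ lies on, each edge of~$G$ meets~$\gamma$ in at most three points, and it meets~$\gamma$ in an odd number of points exactly when its endpoints are on opposite sides of~$\gamma$; writing~$c$ for the number of such edges, the edge--triangle crossings contribute at most $W(3c+2(m-c))=W(2m+c)$ with $c\le\mcut(G)$, and the crossings internal to~$G$ contribute at most~$M(G)$. For the lower bound I would start from a maximum cut $(A,B)$ of~$G$, draw~$\gamma$ as a circle split into three arcs (the edges of~$T$), place~$A$ inside and~$B$ outside, and route each cut edge so that it crosses all three arcs and each non-cut edge so that it crosses exactly two of them; edges of~$G$ may cross one another freely at no cost, so there is nothing to prevent these routings, and they realize $W(3\mcut(G)+2(m-\mcut(G)))=W(2m+\mcut(G))$ units of weight.

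Membership in NP I would obtain from the planarization: an optimal drawing of~$G'$ has at most $M(G')$ crossings, a polynomial number, so placing a dummy vertex at each crossing turns it into a plane graph of polynomial size, and a polynomial certificate is the combinatorial description of that plane graph---a rotation system, together with the component-nesting data needed since~$G'$ is disconnected---plus, for each crossing vertex, the pair of edges of~$G'$ responsible for it; one then checks in polynomial time that this data describes a drawing of~$G'$ in which independent edges cross at most once, incident edges do not cross, and the crossings have the prescribed total weight.

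The step I expect to be the main obstacle is the lower-bound routing. In the rectilinear and convex cases of Bald et al.\ the heavy edge could simply be stretched until its endpoints reach the convex hull, unambiguously splitting the point set; here one must instead argue that, with the triangle drawn as a circle, \emph{every} edge of~$G$ can be routed independently and simultaneously so as to cross the prescribed two or three of its arcs, which is exactly where the extra freedom of topological (as opposed to straight-line) drawings and the irrelevance of $G$-internal crossings are used. A smaller point to pin down carefully is that the triangle really is \emph{forced} to be a simple closed curve in every optimal drawing, which follows solely from the convention that two edges have at most one point in common, so two incident edges cannot cross.
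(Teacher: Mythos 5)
Your overall strategy is the paper's own: attach a disjoint heavy triangle~$T$, use the fact that its three pairwise incident (hence pairwise non-crossing) edges form a simple closed curve whose inside/outside partition induces a cut of~$G$, sandwich $\mwtcr(G')$ between $W(2m+\mcut(G))$ and $W(2m+\mcut(G)+1)$, extract $\mcut(G)$ with a floor, and get NP membership from a combinatorial certificate verified via planarization. Your $W=M(G)+1$ plays the role of the paper's $t=9/8\cdot n^2$ (both exceed $\mcr(G)$), and your upper-bound and membership arguments are sound and essentially identical to the paper's.

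The genuine gap is exactly the step you yourself flag: the lower-bound drawing. The justification you give --- ``edges of $G$ may cross one another freely at no cost, so there is nothing to prevent these routings'' --- is invalid in the model in which $\mwtcr$ is defined (and which your own upper bound relies on): independent edges may cross at most once and incident edges may not cross at all. You therefore need to exhibit a legal drawing in which, simultaneously, every cut edge crosses all three triangle edges, every non-cut edge crosses exactly two, no incident pair of $G$-edges crosses, and no independent pair crosses twice; this is not automatic, because the prescribed crossing pattern with the three triangle edges constrains the relative routing of the $G$-edges (for instance, two edges whose crossing points interleave along the triangle are forced to cross each other), so feasibility must be argued, not asserted. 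The paper closes this hole using the 3-regularity that your construction never exploits: for a maximum cut $(V_1,V_2)$ of a cubic graph, each vertex has at most one incident non-cut edge, so $G[V_1]$ and $G[V_2]$ are matchings $M_1,M_2$; splitting each $V_i$ into a left and a right part so that all $M_i$-edges run left to right yields the explicit drawing of Fig.~\ref{fig:triangle}, where each matching edge crosses exactly two triangle edges, each cut edge crosses all three, and the at-most-one-crossing and no-incident-crossing rules are visibly satisfied, giving $\mwtcr(G')\ge t(2m+\mcut(G))$. Supplying such an explicit construction (or a careful general routing argument replacing it) is what your proposal is missing.
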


\begin{proof}
  Clearly, topological crossing maximization is in $\cal NP$ since we
  can guess a rotation system for the given graph and, for each edge,
  the ordered subset of the other edges that cross it.  In polynomial
  time, we can then check whether (a)~the weights of the crossings sum
  up to the given threshold, and (b)~the solution is feasible, simply
  by realizing the crossings via dummy vertices of degree~4 and
  testing for planarity of the so-modified graph.

  To show NP-hardness, we reduce from \textsc{3MaxCut}.  Given an
  instance of \textsc{3MaxCut}, that is, a 3-regular graph~$G$, we
  construct an instance of topological crossing maximization, that is,
  a weighted graph~$G'$.  Let~$G'$ be the disjoint union of~$G$ with
  edges of weight~1 and a single triangle~$T$ with edges of (large)
  weight~$t$.  Any edge of~$G$ that connects a vertex in the interior
  of~$T$ to a vertex in the exterior of~$T$ can cross~$T$ up to three
  times (that is, each edge of~$T$ once).  Any edge that connects two
  vertices in the interior (or two vertices in the exterior) of~$T$
  can cross~$T$ at most twice.  In any 3-regular graph $(V,E)$, it
  holds that $|E| = 3/2\cdot|V|$.  Due to the 3-regularity of~$G$, we
  have that, for each vertex in~$V$, at least two of its incident
  edges are in a maximum cut.  Hence, $\mcut(G) \ge 2/3\cdot m = n$,
  where $n$ and $m$ are the numbers of the vertices and edges of~$G$.
  Let~$C=(V_1,V_2)$ be any maximum cut of~$G$.  Since any vertex has
  at most one edge that does not cross~$C$, the edges in $G[V_1]$ form
  a matching~$M_1$ and the edges in~$G[V_2]$ form a matching~$M_2$.

  \begin{figure}[tb]
    \begin{minipage}[b]{.59\textwidth}
      \centering
      \includegraphics{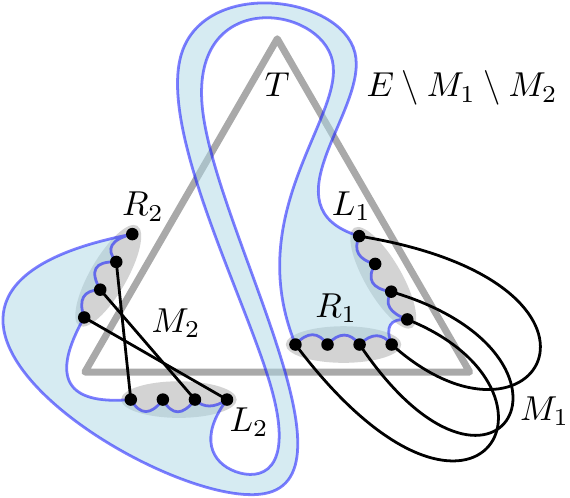}
    \end{minipage}
    \hfill
    \begin{minipage}[b]{.37\textwidth}
      \centering
      \includegraphics{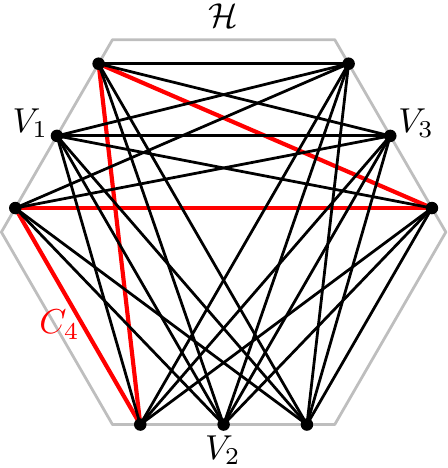}
    \end{minipage}

    \begin{minipage}[t]{.59\textwidth}
      \caption{Given a 3-regular graph~$G$, a drawing of~$G'=G+T$ with
        the maximum number of crossings yields a maximum cut of~$G$ if
        the edges of the triangle~$T$ have much larger weight than the
        edges of~$G$.  The edges (in the light blue region) that
        cross~$T$ trice are in the cut.}
      \label{fig:triangle}
    \end{minipage}
    \hfill
    \begin{minipage}[t]{.37\textwidth}
      \caption{A cross\-ing-maximal drawing of the complete tripartite
        graph $K_{k,k,k}$.}
      \label{fig:hexagon}
    \end{minipage}
  \end{figure}

  Consider a drawing of~$G'$ as in Fig.~\ref{fig:triangle}.  For $i
  \in \{1,2\}$, partition the vertices in~$V_i$ into a left
  subset~$L_i$ and a right subset~$R_i$ so that all edges in~$M_i$ go
  from left to right.  Each edge in the cut crosses all edges of~$T$.
  Each edge in~$M_1 \cup M_2$ crosses exactly two edges of~$T$.
  Clearly, $\mcr(G) \le \binom{m}{2} = \binom{3n/2}{2} < 9/8\cdot
  n^2$.  To ensure that one crossing of an edge of~$T$ contributes
  more than this, we set $t = 9/8\cdot n^2$.  Since any edge of~$G$
  crosses triangle~$T$ at least twice, we get the lower bound
  $\mwtcr(G') \ge t(2m+\mcut(G))$ and the upper bound $\mwtcr(G') \le
  2mt+t\cdot\mcut(G)+\mcr(G) < t (2m+\mcut(G)+1)$, which yields
  $\mcut(G) = \lfloor \mwtcr(G')/t \rfloor - 2m$.
\end{proof}
In Appendix~\ref{sec:ptas} we argue why it is unlikely that
\textsc{MaxWtCrNmb} admits a PTAS.

We now set out to strengthen the result of
Theorem~\ref{thm:weighted-topological}; we want to show that even the
unweighted maximum crossing number is hard to compute.  Observe that
in the above proof, the given graph~$G$ from the \textsc{3MaxCut}
instance remained unweighted, but we required a heavily weighted
additional triangle~$T$.  Our goal is now, essentially, to substitute
$T$ with an unweighted structure that serves the same purpose.
Unfortunately, due to the large number of crossings of this new
structure, we cannot make any statement about non-approximability of
the unweighted case.  The na\"ive approach of simply adding multiple
unweighted triangles does not easily work since already the
entanglement of the triangles among each other is non-trivial to
argue.

\begin{theorem}
  Given a graph~$G$, $\mcr(G)$ is NP-complete to compute.
\end{theorem}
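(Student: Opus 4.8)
The plan is to reduce from \textsc{3MaxCut}, following the architecture of the proof of Theorem~\ref{thm:weighted-topological} but eliminating weights. The heavily weighted triangle~$T$ there played two roles: its three edges bound a region, so that placing a vertex of~$G$ inside or outside~$T$ encodes a side of a cut; and its large weight forces the optimal drawing to route every edge of~$G$ so as to cross~$T$ as often as its endpoints allow. To mimic this without weights I would replace~$T$ by a large complete tripartite graph $K_{k,k,k}$, where $k$ is a sufficiently large polynomial in $n=|V(G)|$. Intuitively, $K_{k,k,k}$ is a bundle of many triangles fused in a tripartite fashion; the fusing is precisely what sidesteps the difficulty with independent triangles mentioned above, because the fused object has an essentially rigid crossing-maximal drawing, the ``hexagonal'' one of Fig.~\ref{fig:hexagon}, in which the three classes sit at the three corners and the three complete bipartite ``bundles'' of $k^2$ edges each form the three sides. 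Membership of $\mcr$ in~$\cal NP$ follows exactly as in Theorem~\ref{thm:weighted-topological}: guess a rotation system and, for each edge, the ordered list of edges crossing it; planarize and test planarity for feasibility, then compare the crossing count to the threshold.

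So let $G'=G+K_{k,k,k}$ be the disjoint union of the $3$-regular instance~$G$ and a copy of $K_{k,k,k}$. The first step is a rigidity lemma for $K_{k,k,k}$: its crossing-maximal drawings are, up to homeomorphism and inessential local perturbations, the hexagonal drawing of Fig.~\ref{fig:hexagon}, and—quantitatively—every drawing that does not exhibit the central-face/three-bundle structure has $\omega(nk^2)$ fewer crossings than the optimum. Granting this, I would argue that in any crossing-maximal drawing of $G'$ the copy of $K_{k,k,k}$ is drawn hexagonally: the number of crossings that can possibly involve an edge of~$G$ is at most $|E(G)|\cdot|E(K_{k,k,k})|+\mcr(G)=O(nk^2)$, which is negligible against the loss incurred by a non-hexagonal drawing of $K_{k,k,k}$ once $k$ is a large enough polynomial in~$n$. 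Hence attaching~$G$ cannot perturb the optimal drawing of the gadget.

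With the gadget drawn hexagonally, each vertex of~$G$ lies either in the bounded central face or in the unbounded outer region, which partitions $V(G)$ and thereby specifies a cut~$C$ of~$G$. An edge of~$G$ joining two vertices on the same side of~$C$ can be routed to cross exactly two of the three bundles, while an edge joining opposite sides must cross all three; conversely, since a $G$-edge meets each edge of $K_{k,k,k}$ at most once and the edges crossing any fixed separating triangle of $K_{k,k,k}$ an odd number of times form a cut of~$G$, essentially no drawing beats a ``two or three bundles'' routing dictated by a cut. Taking a maximum cut in the role of~$C$ yields a drawing of $G'$ with $\mcr(K_{k,k,k}) + a\cdot|E(G)| + b\cdot\mcut(G)$ crossings for suitable integers $a,b$ with $b=\Theta(k^2)$, and the converse bounds $\mcr(G')$ above by $\mcr(K_{k,k,k}) + a\cdot|E(G)| + b\cdot\mcut(G) + \mcr(G)$; choosing $k$ large enough that $\mcr(G)<b$ lets one recover $\mcut(G)$ from $\mcr(G')$. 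So \textsc{3MaxCut} reduces to computing $\mcr$, and with membership in $\cal NP$ we get NP-completeness.

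The main obstacle is the rigidity lemma together with the exact bookkeeping it must support. It is not enough that the hexagonal drawing maximizes $\mcr(K_{k,k,k})$; I need it to be \emph{robustly} optimal—every structurally different drawing suboptimal by more than the $O(nk^2)$ crossings that~$G$ could ever contribute—and, simultaneously, that within a hexagonal drawing the number of $K_{k,k,k}$-edges crossed by a $G$-edge is controlled to within less than one ``bundle'' as a function of whether its endpoints lie inside or outside the central face, so that the explicit lower bound and the counting upper bound on $\mcr(G')$ differ by strictly less than~$b$. Establishing this structure for $K_{k,k,k}$, whose own edges cross one another $\Theta(k^4)$ times, is the delicate part; it is also the reason the reduction yields only NP-hardness and no inapproximability, since the gadget contributes $\Theta(k^4)$ crossings that dwarf the $\Theta(k^2)\cdot\mcut(G)$ signal.
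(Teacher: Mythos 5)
Your reduction uses exactly the paper's gadget (the disjoint union $G'=G+K_{k,k,k}$, membership in $\cal NP$ as in the weighted case, and the final cut-recovery bookkeeping), but the step you yourself flag as ``the main obstacle''---a rigidity/stability lemma asserting that \emph{every} structurally non-hexagonal drawing of $K_{k,k,k}$ loses $\omega(nk^2)$ crossings, so that attaching $G$ cannot perturb the gadget---is left entirely unproven, and it is both vague (``up to homeomorphism and inessential local perturbations'') and quantitatively strong. Near-hexagonal drawings can be suboptimal by only $O(1)$ crossings, so any usable statement would have to be phrased very carefully about which structural features survive, and nothing in your sketch indicates how to prove it. As it stands, the argument ``the optimal drawing of $G'$ must draw $K_{k,k,k}$ hexagonally because $G$ can only contribute $O(nk^2)$ crossings'' is a conjecture, not a proof, and the subsequent two-or-three-bundles accounting rests on it.

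The paper circumvents this difficulty with a much weaker and easily provable statement: it only needs that \emph{some} crossing-maximal drawing of $G'$ has the desired structure, which it gets by an exchange argument rather than rigidity. Start from an arbitrary optimal drawing of $G'$, pick the triangle $T_\tau=(v_1,v_2,v_3)\in V_1\times V_2\times V_3$ of $K$ with the maximum number of crossings with $G$, and redraw $K$ hexagonally with each part $V_i$ clustered in a small neighborhood of $v_i$ so that every edge $(w_i,w_j)$ crosses exactly the $G$-edges that $(v_i,v_j)$ crosses. This leaves the crossings inside $G$ untouched, realizes $\mcr(K)$ inside $K$ (every $4$-cycle of $K$ crosses in the hexagonal drawing, every crossing lies in a $4$-cycle, and a $4$-cycle yields at most one crossing), and does not decrease the $K$--$G$ crossings, because each $K$-edge lies in $k$ of the $k^3$ triangles, so $k^2$ times the maximum triangle--$G$ crossing count is at least the total $K$--$G$ crossing count. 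After this redrawing, $K$ plays exactly the role of the heavy triangle of weight $t=k^2$ from the weighted proof, and the cut-recovery inequalities go through with $k>\sqrt{9/8}\,n$ so that $\mcr(G)<k^2$. If you replace your rigidity lemma by this choose-the-best-triangle-and-redraw argument, your outline becomes the paper's proof; without it, the proposal has a genuine gap at its central step.
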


\begin{proof}
  The membership in $\cal NP$ follows from
  Theorem~\ref{thm:weighted-topological}.  To argue hardness, given an
  instance~$G$ of \textsc{3MaxCut}, we construct an unweighted
  graph~$G'$---the instance for computing $\mcr(G')$---as the disjoint
  union of $G$ and a complete tripartite graph $K:=K_{k,k,k}$ with $k$
  vertices per partition set, $k>\sqrt{9/8}\cdot n$.  A result of
  Harborth~\cite{harborth1976parity} yields $\mcr(K) = \binom{3k}{4} -
  3\binom{k}{4} - 6k\binom{k}{3} \in \Theta(k^4)$.

  We first analyze a crossing-maximal drawing of~$K$; see
  Fig.~\ref{fig:hexagon}.  Consider a straight-line drawing ``on a
  regular hexagon $\mathcal{H}$''.  Let $V_1,V_2,V_3$ be the partition
  sets of~$K$ and label the edges of~$\mathcal{H}$ cyclically
  $1,2,\ldots,6$.  Place $V_i$, $1\leq i\leq 3$, along edge~$2i$
  of~$\mathcal{H}$.  We claim that $\mcr(K)$ is achieved by this
  drawing. In fact, the arguments are analogous to the maximality of
  the na\"ive drawing for complete bipartite graphs on two layers: a
  4-cycle can have at most one crossing.  In the above drawing, every
  4-cycle has a crossing.  On the other hand, any crossing in
  \emph{any} drawing of~$K$ is contained in a 4-cycle.

  Intuitively, when thinking about shrinking the sides $1,3,5$
  in~$\mathcal{H}$, we obtain a drawing akin to $T$ in the hardness
  proof for the weighted maximum crossing number.  It remains to argue
  that there is an optimal drawing of full~$G'$ where $K$ is drawn as
  described.  Consider a drawing realizing $\mcr(G')$ and note that
  any triangle in $K$ is formed by a vertex triple, with a vertex from
  each partition set.  Pick a triple $\tau=(v_1,v_2,v_3)\in V_1 \times
  V_2 \times V_3$ that induces a triangle~$T_\tau$ with maximum number
  of crossings with~$G$ among all such triangles.  Now, redraw~$K$
  along~$T_\tau$ according to the above drawing scheme such that, for
  $i=1,2,3$, it holds that (a)~all vertices of~$V_i$ are in a small
  neighborhood of~$v_i$ and (b)~any edge $(w_i,w_j) \in V_i \times
  V_j$ for some $j \ne i$ crosses exactly the same edges of~$G$ as the
  edge $(v_i,v_j)$.  Our new drawing retains the same crossings
  within~$G'$, achieves the maximum number of crossings within~$K$,
  and does not decrease the number of crossings between~$K$ and~$G$;
  hence it is optimal.  In this drawing, $K$ plays the role of the
  heavy triangle~$T$ in the hardness proof of the weighted case, again
  yielding NP-hardness.
\end{proof}

\section{Conclusions and Open Problems}

We have considered the crossing maximization problem in the
topological, rectilinear, and convex settings.  In particular, we
disproved a conjecture of Alpert et al.~\cite{alpert2009} that the
maximum crossing number in the latter two settings always coincide.
On the other hand, we propose a new setting, the ``separated drawing''
setting, and ask whether for every bipartite graph the maximum
rectilinear, maximum convex, maximum separated, and maximum separated
convex crossing numbers coincide.

Concerning complexity, we have shown that the maximum rectilinear
crossing number is APX-hard and the maximum topological crossing
number is NP-hard.  A natural question then is whether the maximum
topological crossing number is also APX-hard.  We have shown this to
be true in the weighted topological case.  It also remains open
whether rectilinear crossing maximization is in $\cal NP$, which would
have followed if the rectilinear and convex setting were equivalent as
conjectured by Alpert et al..
A reviewer of an earlier version of this paper was wondering about
the complexity of maximum crossing number for planar graphs.
For planar graphs, \textsc{MaxCut} is tractable and our hardness
arguments no longer apply, leaving open the question of the complexity
of computing the maximum crossing number for this graph class.

Other intriguing crossing maximization problems remain open: apart
from the two classic problems that we mentioned above---Conway's
Thrackle Conjecture and Ringeisen's Subgraph Problem---we are
interested in the separation of the rectilinear and the separated
convex setting for bipartite graphs.


\paragraph*{Acknowledgments.}

Work on this problem started at the 2016 Bertinoro Workshop of Graph
Drawing.  We thank the organizers and other participants for
discussions, in particular Michael Kaufmann.  We also thank Marcus
Schaefer, G\'abor Tardos, and Manfred Scheucher.

\bibliographystyle{alphaurl} 
\bibliography{abbrv,crossings}

\newcommand{\etalchar}[1]{$^{#1}$}
\begin{thebibliography}{KKMO07}

\bibitem[AFH09]{alpert2009}
Matthew Alpert, Elie Feder, and Heiko Harborth.
\newblock The maximum of the maximum rectilinear crossing numbers of
  $d$-regular graphs of order $n$.
\newblock {\em Electron. J. Combin.}, 16(1), 2009.
\newblock URL:
  \url{http://www.combinatorics.org/ojs/index.php/eljc/article/view/v16i1r54/0}.

\bibitem[Bal85]{baltzer1885erinnerung}
Richard Baltzer.
\newblock Eine {E}rinnerung an {M}{\"o}bius und seinen {F}reund {W}eiske.
\newblock {\em Berichte {\"u}ber die Verhandlungen der K{\"o}niglich
  S{\"a}chsischen Gesellschaft der Wissenschaften zu Leipzig.
  Mathematisch-Physische Classe}, 37:1--6, 1885.

\bibitem[BJL16]{Bald16}
Samuel Bald, Matthew Johnson, and Ou~Liu.
\newblock Approximating the maximum rectilinear crossing number.
\newblock In T.~N. Dinh and M.~T. Thai, editors, {\em Proc. 22nd Int. Comput.
  Combin. Conf. (COCOON'16)}, volume 9797 of {\em LNCS}, pages 455--467.
  Springer, 2016.
\newblock \href {http://dx.doi.org/10.1007/978-3-319-42634-1_37}
  {\path{doi:10.1007/978-3-319-42634-1_37}}.

\bibitem[BK99]{bk-stir-ICALP99}
Piotr Berman and Marek Karpinski.
\newblock On some tighter inapproximability results.
\newblock In J.~Wiedermann, P.~{van Emde Boas}, and M.~Nielsen, editors, {\em
  Proc. 26th Int. Colloq. Autom. Lang. Program. (ICALP'99)}, volume 1644 of
  {\em LNCS}, pages 200--209. Springer, 1999.
\newblock \href {http://dx.doi.org/10.1007/3-540-48523-6_17}
  {\path{doi:10.1007/3-540-48523-6_17}}.

\bibitem[{\c{C}}EKS09]{ceks-cmwbg-JDA09}
Olca~A. {\c{C}}akiroglu, Cesim Erten, {\"{O}}mer Karatas, and Melih
  S{\"{o}}zdinler.
\newblock Crossing minimization in weighted bipartite graphs.
\newblock {\em J. Discrete Algorithms}, 7(4):439--452, 2009.
\newblock \href {http://dx.doi.org/10.1016/j.jda.2008.08.003}
  {\path{doi:10.1016/j.jda.2008.08.003}}.

\bibitem[EW94]{eades94}
Peter Eades and Nicholas~C. Wormald.
\newblock Edge crossings in drawings of bipartite graphs.
\newblock {\em Algorithmica}, 11:379--403, 1994.
\newblock \href {http://dx.doi.org/10.1007/BF01187020}
  {\path{doi:10.1007/BF01187020}}.

\bibitem[FK77]{furry1977maximal}
W.H. Furry and D.J. Kleitman.
\newblock Maximal rectilinear crossing of cycles.
\newblock {\em Studies Appl. Math.}, 56(2):159--167, 1977.
\newblock \href {http://dx.doi.org/10.1002/sapm1977562159}
  {\path{doi:10.1002/sapm1977562159}}.

\bibitem[GJ83]{garey83}
Michael~R. Garey and David~S. Johnson.
\newblock Crossing number is {NP}-complete.
\newblock {\em SIAM J. Alg. Disc. Methods}, 4:312--316, 1983.
\newblock \href {http://dx.doi.org/10.1137/0604033}
  {\path{doi:10.1137/0604033}}.

\bibitem[Gr{\"u}72]{grunbaum1972arrangements}
Branko Gr{\"u}nbaum.
\newblock Arrangements and spreads.
\newblock {\em CBMS Regional Conf. Series in Math.}, 10:27, 1972.
\newblock \href {http://dx.doi.org/10.1090/cbms/010}
  {\path{doi:10.1090/cbms/010}}.

\bibitem[GW95]{gw-iaamc-JACM95}
Michel~X. Goemans and David~P. Williamson.
\newblock Improved approximation algorithms for maximum cut and satisfiability
  problems using semidefinite programming.
\newblock {\em J. {ACM}}, 42(6):1115--1145, 1995.

\bibitem[Har76]{harborth1976parity}
Heiko Harborth.
\newblock Parity of numbers of crossings for complete $ n $-partite graphs.
\newblock {\em Mathematica Slovaca}, 26(2):77--95, 1976.
\newblock URL: \url{https://eudml.org/doc/33976}.

\bibitem[H{\aa}s01]{h-soir-JACM01}
Johan H{\aa}stad.
\newblock Some optimal inapproximability results.
\newblock {\em J. ACM}, 48(4):798--859, 2001.
\newblock \href {http://dx.doi.org/10.1145/502090.502098}
  {\path{doi:10.1145/502090.502098}}.

\bibitem[KKMO07]{kkmo-oirm-SICOMP07}
Subhash Khot, Guy Kindler, Elchanan Mossel, and Ryan O'Donnell.
\newblock Optimal inapproximability results for {MAX}-{CUT} and other
  2-variable {CSPs}?
\newblock {\em {SIAM} J. Comput.}, 37(1):319--357, 2007.
\newblock \href {http://dx.doi.org/10.1137/S0097539705447372}
  {\path{doi:10.1137/S0097539705447372}}.

\bibitem[KPR{\etalchar{+}}08]{kprsv08}
Mihyun Kang, Oleg Pikhurko, Alexander Ravsky, Mathias Schacht, and Oleg
  Verbitsky.
\newblock Obfuscated drawings of planar graphs.
\newblock ArXiv, 2008.
\newblock URL: \url{arxiv.org/abs/0803.0858v3}.

\bibitem[LPS97]{lovasz1997conway}
L{\'a}szl{\'o} Lov{\'a}sz, J{\'a}nos Pach, and Mario Szegedy.
\newblock On {C}onway's thrackle conjecture.
\newblock {\em Discrete Comput. Geom.}, 18(4):369--376, 1997.
\newblock \href {http://dx.doi.org/10.1007/PL00009322}
  {\path{doi:10.1007/PL00009322}}.

\bibitem[Nag05]{n-ib1sm-DCG05}
Hiroshi Nagamochi.
\newblock An improved bound on the one-sided minimum crossing number in
  two-layered drawings.
\newblock {\em Discrete Comput. Geom.}, 33(4):569--591, 2005.
\newblock \href {http://dx.doi.org/10.1007/s00454-005-1168-0}
  {\path{doi:10.1007/s00454-005-1168-0}}.

\bibitem[PRS91]{piazza1991properties}
Barry~L. Piazza, Richard~D. Ringeisen, and Sam~K. Stueckle.
\newblock Properties of nonminimum crossings for some classes of graphs.
\newblock In Yousef~Alavi et~al., editor, {\em Proc. 6th Quadrennial Int. 1988
  Kalamazoo Conf. Graph Theory Combin. Appl.}, volume~2, pages 975--989. Wiley,
  New York, 1991.
\newblock URL:
  \url{https://www.zentralblatt-math.org/ioport/en/?id=2845087&type=pdf}.

\bibitem[Rin64]{ringel1963extremal}
Gerhard Ringel.
\newblock Extremal problems in the theory of graphs.
\newblock In {\em Proc. Theory of Graphs and its Applications (Smolenice
  1963)}, pages 85--90, 1964.

\bibitem[RSP91]{ringeisen1991subgraphs}
Richard~D. Ringeisen, Sam~K. Stueckle, and Barry~L. Piazza.
\newblock Subgraphs and bounds on maximum crossings.
\newblock {\em Bull. Inst. Combin. Appl}, 2:9--27, 1991.

\bibitem[Sch14]{survey}
Marcus Schaefer.
\newblock The graph crossing number and its variants: A survey.
\newblock {\em Electron. J. Combin.}, Dynamic Surveys(\#DS21):100 pages, 2014.
\newblock URL:
  \url{http://www.combinatorics.org/ojs/index.php/eljc/article/view/DS21}.

\bibitem[SSSV01]{sssav01}
Farhad Shahrokhi, Ondrej S{\'y}kora, L{\'a}szl{\'o}~A. Sz{\'e}kely, and Imrich
  Vrt'o.
\newblock On bipartite drawings and the linear arrangement problem.
\newblock {\em SIAM J. Comput.}, 30:1773--1789, 2001.
\newblock \href {http://dx.doi.org/10.1137/S0097539797331671}
  {\path{doi:10.1137/S0097539797331671}}.

\bibitem[Sta93]{staudacher1893lehrbuch}
Hans Staudacher.
\newblock {\em Lehrbuch der {K}ombinatorik: {A}usf{\"u}hrliche {D}arstellung
  der {L}ehre von den kombinatorischen {O}perationen ({P}ermutieren,
  {K}ombinieren, {V}ariieren)}.
\newblock J. Maier, Stuttgart, 1893.
\newblock URL: \url{https://archive.org/details/lehrbuchderkomb00staugoog}.

\bibitem[Ste23]{steinitz1923maximalzahl}
Ernst Steinitz.
\newblock {\"U}ber die {M}aximalzahl der {D}oppelpunkte bei ebenen {P}olygonen
  von gerader {S}eitenzahl.
\newblock {\em Mathematische Zeitschrift}, 17(1):116--129, 1923.
\newblock URL: \url{https://eudml.org/doc/167730}.

\bibitem[Ver08]{verbitsky2008obfuscation}
Oleg Verbitsky.
\newblock On the obfuscation complexity of planar graphs.
\newblock {\em Theoret. Comput. Sci.}, 396(1):294--300, 2008.
\newblock \href {http://dx.doi.org/10.1016/j.tcs.2008.02.032}
  {\path{doi:10.1016/j.tcs.2008.02.032}}.

\bibitem[Woo71]{woodall71}
Douglas~R. Woodall.
\newblock Thrackles and deadlock.
\newblock In D.~Welsh, editor, {\em Proc. Combinatorial Mathematics and its
  Applications}, pages 335--347. Academic Press, 1971.

\bibitem[Yan78]{y-nednp-STOC78}
Mihalis Yannakakis.
\newblock Node- and edge-deletion {NP}-complete problems.
\newblock In {\em Proc. 10th Annu. ACM Symp. Theory Comput. (STOC'78)}, pages
  253--264, 1978.
\newblock \href {http://dx.doi.org/10.1145/800133.804355}
  {\path{doi:10.1145/800133.804355}}.

\end{thebibliography}

\bigskip 
\appendix

\noindent\textbf{\Large Appendix}

\section{Counterexamples with 12 Vertices}
\label{sec:small}

Here we provide three similar graphs with $12$ vertices and $16$ edges
violating the convexity conjecture (Conjecture~\ref{c-general}).  Note
that each graph is planar and has maximum degree~$4$ or~$5$.  This
shows that the convexity conjecture is false also for some natural
graph classes such as planar graphs or graphs with maximum degree at
most four.  Our proof is based on a relatively long
case-analysis. Manfred Scheucher independently verified by a computer
search that these three graphs indeed violate the convexity
conjecture. Moreover, his unsuccessful attempts to find a smaller
counterexample with the use of computer search support our feeling
that the convexity conjecture might hold for all graphs on at most
$11$ vertices.

Let $H$ be the graph with $10$ vertices and $12$ edges from the
previous subsections. We distinguish three types of vertices:
$A$-vertices, $B$-vertices, and $C$-vertices. The central vertex is
the only {\em $A$-vertex}.  The three vertices $v_0,v_3,v_6$ of $H$
connected to the central vertex are the {\em $B$-vertices} and the six
vertices in $H$ of degree two are {\em $C$-vertices}.  The three edges
adjacent to the $A$-vertex are called $\alpha$-edges, the six edges
connecting a $B$-vertex with a $C$-vertex are called $\beta$-edges and
the remaining three edges connecting independent pairs of $C$-vertices
are called $\gamma$-edges. The nine $B$- and $C$-vertices are {\em
  cycle vertices}, and the nine $\beta$- and $\gamma$-edges forming a
$9$-cycle are called {\em cycle edges}.

\begin{figure}[hb]
  \centering
  \includegraphics{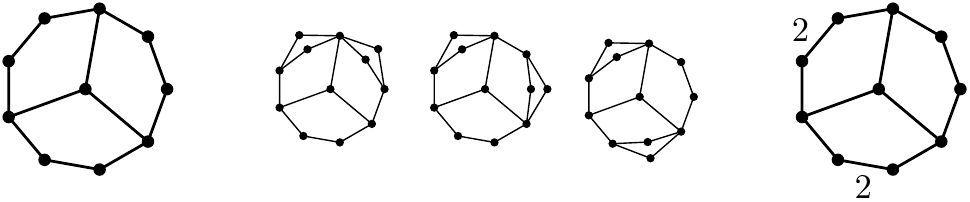}%
  \caption{The graph $H$ (left), the three possible graphs $H_{16}$
    (middle), and the weighted graph $W_{14}$ (right).}
  \label{fig:small-counterexamples}
\end{figure}


We choose $H_{16}$ as a graph obtained from $H$ by selecting a pair of
non-adjacent $C$-vertices and replacing each of them by a pair of
independent vertices. Since the $C$-vertices have degree two in $H$,
four edges of $H$ are replaced by a copy of $K_{1,2}$, thus the graph
$H_{16}$ has $12$ vertices and $16$ edges. Up to isomorphism, $H_{16}$
is one of the three planar graphs depicted in the middle of
Fig.~\ref{fig:small-counterexamples}.
It corresponds to the weighted graph $W_{16}$ which is the graph $H$
with edge weights, where two of the $\beta$-edges have weight two, two
of the $\gamma$-edges have weight two, and the remaining eight edges
have weight one. Further, let $W_{14}$ be the same weighted graph with
the exception that all the $\beta$-edges have weight one, see the
right of Fig.~\ref{fig:small-counterexamples}.  Thus, only two
$\gamma$-edges have weight two, otherwise the edges in $W_{14}$ have
weight one. The graph $W_{14}$ is, up to isomorphism, uniquely
determined regardless of the graph $H_{16}$.

We now give two lemmas used in the proof that $H_{16}$ is a
counterexample for the convexity conjecture.

\begin{lemma}\label{l:cycle-edge-avoids}
  In any drawing of $H$, any cycle edge avoids another edge.
\end{lemma}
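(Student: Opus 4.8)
The plan is to argue by contradiction. Suppose $D$ is a drawing of $H$ in which some cycle edge $e$ crosses \emph{every} edge independent from it. A $\beta$-edge is independent from exactly $8$ edges (the two $\alpha$-edges not incident to it and the six cycle edges not incident to it), and a $\gamma$-edge is independent from $9$ edges (all three $\alpha$-edges and the six non-incident cycle edges). Since the automorphism group of $H$ is generated by the rotation $v_i\mapsto v_{i+3}$ and one reflection, it suffices to treat $e=v_0v_1$ (a $\beta$-edge) and $e=v_1v_2$ (a $\gamma$-edge). In both cases the six cycle edges that $e$ must cross form a path $\rho$ along the $9$-cycle (from $v_2$ to $v_8$, resp.\ from $v_3$ to $v_0$), so $e$ must cross $\rho$ exactly once inside each of $\rho$'s six edges, and in addition $e$ must cross $zv_3$ and $zv_6$, plus $zv_0$ in the $\gamma$-case. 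We may assume $D$ is in general position (no three edges meet at a point), as permitted by the crossing model. As a sanity check, note that $H$ is a subdivision of $K_4$ with $\mcr(K_4)=1$; this makes the statement plausible but does not prove it, since the drawing of $K_4$ obtained from $D$ by replacing each subdivided path by a single arc along it need not be a good drawing.

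The key tool I would use is a Jordan-curve (``triangle'') argument. If $e$ crosses the $\alpha$-edges $zv_i$ and $zv_j$ at points $x$ and $y$, then the three arcs ``$x$ to $y$ along $e$'', ``$y$ to $z$ along $zv_j$'', and ``$z$ to $x$ along $zv_i$'' pairwise meet only in their common endpoints: the arcs lying on $e$ and on $zv_i$ can meet only at the single crossing $x$ (and likewise $e$ and $zv_j$ at $y$) because independent edges cross at most once, while the arcs on $zv_i$ and $zv_j$ meet only at $z$ since incident edges do not cross. Hence these three arcs bound a simple closed curve $T_{ij}$. Any edge that shares a vertex with each of $e$, $zv_i$, and $zv_j$ cannot cross $T_{ij}$ at all; in particular, in the $\beta$-case the third $\alpha$-edge $zv_0$ (sharing $z$ with $zv_3,zv_6$ and $v_0$ with $e$) cannot cross $T_{36}$, so $zv_0$, and hence $v_0$, lies in one fixed open region of $T_{36}$. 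Similarly, the two sub-arcs of $e$ lying outside $T_{36}$ and the sub-arcs of $zv_3,zv_6$ beyond $x,y$ cannot cross $T_{36}$, so the vertices $v_1,v_3,v_6$ are each pinned to a fixed side of $T_{36}$, and the rotation at $z$ is constrained by which wedge at $z$ is occupied by the interior of $T_{36}$. The $\gamma$-case produces three such triangles, with the mild extra complication that an $\alpha$-edge may cross one of them once across its $e$-side.

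The contradiction then comes from confronting these positional constraints with the forced crossings of $e$ with $\rho$: the path $\rho$ runs between two of the vertices $v_0,v_2,v_3,v_6,v_8$ and passes through $v_3$ and $v_6$, which are pinned to sides of the triangles, while $e$ is essentially trapped along those triangles; I would enumerate the few ways in which the crossings $x$, $y$ (and, in the $\gamma$-case, the $zv_0$-crossing) interleave with the six crossings of $e$ with $\rho$ along $e$, together with the at-most-reflection-many rotations of $z$'s three edges, and check that in every configuration some pair of independent edges would have to cross twice, or an incident pair would have to cross, or a region bounded by these arcs would be forced to both contain and exclude a vertex. I expect the bookkeeping of these interleavings to be the main obstacle: a purely parity (mod-$2$ Jordan-curve) argument is not enough, because the homological relations among the cycles of $H$ (for instance, an $8$-cycle through $z$ is the mod-$2$ sum of two of the $5$-cycles through $z$) are automatically consistent, so one genuinely has to track the planar arrangement of the sub-arcs of $e$, of the $\alpha$-edges, and of $\rho$.
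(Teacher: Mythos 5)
Your write-up is a plan, not a proof: the decisive step --- ``enumerate the few ways in which the crossings interleave \dots and check that in every configuration'' something contradictory happens --- is exactly the content of the lemma, and it is never carried out. Up to that point everything you do (counting that a $\beta$-edge has $8$ and a $\gamma$-edge has $9$ independent edges, reducing to $e=v_0v_1$ and $e=v_1v_2$ by symmetry, building the Jordan triangles $T_{ij}$ from $e$ and two $\alpha$-edges) is sound but only sets the stage; no contradiction is actually derived, and you yourself flag the remaining bookkeeping as the main obstacle. Moreover, by working in the fully topological setting you have made the problem genuinely harder than what is needed: as you observe, mod-$2$ arguments give no obstruction there, so the case analysis you defer is nontrivial and it is not even clear it closes (the statement is only ever applied to straight-line, in fact convex, drawings of $H$ and $H_{16}$, where edges lie on lines).

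For the setting in which the lemma is used there is a one-line argument, which is the paper's: given a cycle edge $e$, the graph $H$ contains a $5$-cycle $Z$ all of whose edges are non-adjacent to $e$ (two $\alpha$-edges, two $\beta$-edges and one $\gamma$-edge; e.g.\ $z,v_3,v_4,v_5,v_6$ works for $e=v_0v_1$ and $z,v_6,v_7,v_8,v_0$ for $e=v_1v_2$). Since $Z$ is an \emph{odd} cycle, two consecutive vertices of $Z$ lie on the same side of the line supporting $e$, and the straight edge joining them cannot meet that line, hence avoids $e$. If you want to salvage your topological approach, you must either complete the interleaving case analysis for both edge types or find a structural obstruction replacing the parity argument; as it stands, the proof has a gap precisely where the work is.
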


\begin{proof}
  Let $e$ be a cycle edge. Then there is a $5$-cycle $Z$ consisting of
  edges non-adjacent to $e$.  (The cycle $Z$ contains two
  $\alpha$-edges, two $\beta$-edges and one $\gamma$-edge.) There must
  be two consecutive vertices of $Z$ lying on the same side of the
  edge $e$ in the considered drawing.  The edge connecting these two
  vertices is avoided by $e$.
\end{proof}

\begin{lemma}\label{l:span-vs.-avoidances}
  In any convex drawing of $H$, any cycle edge of span $s\in\{0,1,2\}$
  avoids at least $6-2s$ cycle edges.
\end{lemma}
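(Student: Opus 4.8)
The plan is to reduce the statement to a one-line combinatorial fact about two-coloured paths. Fix a convex drawing of $H$ and a cycle edge $e = v_iv_{i+1}$ (indices modulo $9$). First I would identify the relevant six cycle edges: the cycle edges independent from $e$ are exactly $v_{i+1}v_{i+2}, v_{i+2}v_{i+3}, \ldots, v_{i+7}v_{i+8}$, and together they form a path $P$ whose seven vertices $v_{i+2},\ldots,v_{i+8}$ are precisely the seven cycle vertices not incident to $e$ — the same seven vertices whose split by the chord $e$ defines the span $s$. By the convex-crossing criterion recalled in the introduction, a cycle edge independent from $e$ avoids $e$ if and only if its two endpoints lie on the same side of the chord $e$. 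So it suffices to prove that at most $2s$ of the six edges of $P$ have their endpoints on opposite sides of $e$.

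The core step is then the colouring argument. The chord $e$ partitions the seven vertices of $P$ into the $s$ vertices on the smaller side (call them black) and the $7-s$ vertices on the larger side (white). Walking along the path $P$ from one end to the other, the black vertices occur in at most $s$ maximal blocks, since each block uses up at least one of the $s$ black vertices. An edge of $P$ has its endpoints on opposite sides of $e$ only if it borders the first or last vertex of some black block; a block lying in the interior of $P$ borders two such edges, and a block containing an endpoint of $P$ borders only one. Hence at most $2s$ edges of $P$ cross $e$, so at least $6-2s$ edges of $P$ avoid $e$, which is the claim.

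The remaining ingredients are routine: that two cycle edges are independent exactly when their four endpoints fail to alternate on the hull, and that for such a pair crossing is equivalent to the endpoints alternating, are immediate from the convex model; I would also just note in passing that the two cycle edges adjacent to $e$ trivially do not cross $e$, so whichever convention for ``avoids'' one adopts, the bound $6-2s$ is safe. I do not anticipate a genuine obstacle here; the only point needing a little care is the bookkeeping for the black blocks sitting at the two ends of $P$, which can only lower the count of crossing edges, so $2s$ is an upper bound (and, as the formula $6-2s$ already hints, not always tight). This is also why the statement is confined to $s\in\{0,1,2\}$: for $s=3$ it degenerates to the vacuous bound $6-2s=0$.
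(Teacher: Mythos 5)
Your proof is correct and takes essentially the same route as the paper's: every cycle edge crossing the chord $e$ must have an endpoint among the $s$ cycle vertices on the smaller side, each of which lies on at most two cycle edges, so at most $2s$ cycle edges cross $e$, and together with the two adjacent cycle edges this leaves at least $6-2s$ cycle edges avoiding $e$ (your black-block bookkeeping is just a repackaging of this degree count). Only a trivial indexing slip: the cycle edges independent from $e=v_iv_{i+1}$ start at $v_{i+2}v_{i+3}$, not $v_{i+1}v_{i+2}$, exactly as your own description of the path $P$ on $v_{i+2},\dots,v_{i+8}$ already indicates.
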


\begin{proof}
  Let $e$ be a cycle edge of span $s$. We first give an upper bound on
  the number of edges incident to $e$.  The edge $e$ is incident to
  exactly one cycle edge at each of its two vertices.  Since every
  cycle edge intersecting $e$ is incident to one of the $s$ cycle
  vertices of the ``span interval'' of $e$, at most $2s$ cycle edges
  intersect $e$. Altogether, at most $2+2s$ cycle edges different from
  $e$ have a point in common with $e$. Since there are eight cycle
  edges different from $e$, the edge $e$ avoids at least
  $8-(2+2s)=6-2s$ cycle edges.
\end{proof}

We now fix a convex drawing $D$ of $H_{16}$ maximizing the number of
crossings and with twins placed next to each other. It gives a convex
drawing of the weighted graph $W_{14}$ in the way described above.
Since there is a non-convex drawing of $H_{16}$ with loss $13$, we
need to show that the loss of the drawing $D$ of $H_{16}$ is at least
$14$.  From Lemma~\ref{l:cycle-edge-avoids}, applied on the drawing
$D$, the loss of $H_{16}$ and the weighted loss of the corresponding
drawing of $W_{14}$ differ by at least two.  Thus, it suffices to show
that the weighted loss of the drawing of $W_{14}$ given by the drawing
$D$ is at least $12$. Before proving it, we fix some notation.

The nine $B$- and $C$-vertices of $W_{14}$ are denoted by
$1,2,\dots,9$ in the counterclockwise order in which they appear in
the drawing $D$.  Without loss of generality we may assume that the
$B$-vertices are $1$, $j$, $k$, where $1<j<k\le9$ and the vertex $A$
lies in the counterclockwise interval $(k,1)$. In other words, the
three vertices $k,A,1$ appear in this counterclockwise order along the
convex hull of the vertex set of $D$.

In the following, if a $\beta$-edge avoids a $\gamma$-edge in $D$, we
say that there is a {\em $\beta\gamma$-avoidance}.  Similarly we
define {\em $\beta\beta$-avoidances} as avoidances of pairs of the
$\beta$-edges, and {\em $\gamma\gamma$-avoidances} as avoidances of
pairs of the $\gamma$-edges.  Finally, {\em $\alpha*$-avoidances} are
avoidances of pairs of edges that contain an $\alpha$-edge.

\begin{lemma}\label{l:k-vs.-avoidances}
  There are at least $2(k-2)$ $\alpha*$-avoidances.
\end{lemma}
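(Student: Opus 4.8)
The goal is to count $\alpha*$-avoidances in the convex drawing $D$, where the $B$-vertices are $1 < j < k \le 9$ and $A$ sits in the counterclockwise arc $(k,1)$. The three $\alpha$-edges are $A1$, $Aj$, $Ak$. The plan is to look at each of the $k-2$ cycle vertices strictly between $1$ and $k$ (i.e.\ vertices $2, 3, \dots, k-1$) and charge two $\alpha*$-avoidances to each of them, making sure the charges are to distinct avoiding pairs so there is no double counting.

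First I would fix the geometry: since $A$ lies in the arc $(k,1)$, all of $2, \dots, k-1$ lie on the arc strictly on the other side, so for each such vertex $v$ the chord $Av$ separates the remaining two $B$-vertices depending on where $v$ falls relative to $j$. The key observation is that a cycle edge $e$ incident to a cycle vertex $v$ with $1 < v < k$ cannot cross all of $A1$, $Aj$, $Ak$ at once; in fact, for a vertex $v$ strictly inside the arc $(1,k)$, at least one of the two $\alpha$-edges among $\{A1, Ak\}$ (plus possibly $Aj$) is ``trapped'' on one side, and I would show each such $v$ contributes at least two avoidances between the $\alpha$-edges and edges incident to $v$. Concretely: the two cycle edges at $v$ together with the three $\alpha$-edges; since $v$ is separated from $A$ by at least one of the chords through the other $B$-vertices, one runs a parity/side argument (two consecutive vertices on the same side of a chord, exactly as in Lemma~\ref{l:cycle-edge-avoids}) to force two avoidances per $v$.

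The second step is ensuring the $2(k-2)$ avoidances are genuinely distinct. Here I would make the charging local: each avoidance charged to vertex $v$ involves an edge incident to $v$ (a $\beta$- or $\gamma$-edge at $v$) avoiding an $\alpha$-edge. Since a cycle edge has only two endpoints, an avoidance of the form ``$\alpha$-edge avoids cycle edge $uv$'' could be charged to both $u$ and $v$, so I must be careful: I would instead charge, for each $v \in \{2, \dots, k-1\}$, avoidances involving specifically the $\alpha$-edge $Av$ itself (which is incident to $v$) against cycle edges, or use the cycle edge at $v$ ``pointing toward'' a fixed direction, so the map from vertices to charged pairs is injective. The natural pairing is: for each $v$ in $\{2,\dots,k-1\}$, count avoidances of the $\alpha$-edge $Av$ — but wait, $Av$ is only an $\alpha$-edge when $v$ is a $B$-vertex, so this needs refinement; more robustly, I would argue that among the at most three $\alpha$-edges, a cycle edge at $v$ must avoid at least one, and set up the count so that the $2(k-2)$ total comes from a sum over vertices with an argument that no pair is counted more than twice, then divide. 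Actually the cleaner route: for each of the $k-2$ interior vertices, exhibit a concrete avoiding pair involving one of its two incident cycle edges and one of the three $\alpha$-edges, and show these $2(k-2)$ pairs are pairwise distinct by tracking which endpoint and which $\alpha$-edge is used.

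\textbf{Main obstacle.} The hard part will be the injectivity bookkeeping in the charging scheme — making sure that as $v$ ranges over $2,\dots,k-1$, the (at least two) $\alpha*$-avoidances attributed to $v$ never coincide with those attributed to another vertex $v'$, given that a single cycle edge spans two consecutive cycle vertices and could in principle be blamed twice. I expect this is handled by orienting the cycle and charging each cycle edge to (say) its counterclockwise-later endpoint, and by using the position of $A$ in the arc $(k,1)$ to guarantee that the specific $\alpha$-edge forced to be avoided differs in a controlled, monotone way as $v$ increases through the values $2, \dots, j-1, j, j+1, \dots, k-1$ (the behavior changing as $v$ passes the $B$-vertex $j$). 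Once that monotonicity is pinned down, the bound $2(k-2)$ follows by summing $2$ over the $k-2$ interior cycle vertices.
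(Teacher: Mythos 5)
Your plan starts in the right place (the $k-2$ vertices $X=\{2,\dots,k-1\}$, and the fact that cycle edges touching $X$ are forced to avoid $\alpha$-edges), but the step you yourself flag as the main obstacle --- making the $2(k-2)$ charged avoidances pairwise distinct --- is exactly the part that is left unresolved, and the concrete fixes you sketch do not close it. If you charge each cycle edge to a single endpoint (say its counterclockwise-later one), then each vertex of $X$ receives the charge of only \emph{one} of its two incident cycle edges, and a single cycle edge incident to $X$ is only guaranteed to avoid \emph{one} $\alpha$-edge (e.g.\ the edge from $2$ to $1$ is incident to $A1$, so it can only be guaranteed to avoid $Ak$); this yields $k-2$, not $2(k-2)$, without further argument. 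The monotonicity-in-$v$ idea is likewise not worked out and is not needed. Also, your ``key observation'' (an edge cannot cross all three $\alpha$-edges) is weaker than what the count requires.

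The missing idea is that the double counting is harmless if you count per cycle edge rather than per vertex, using both $A1$ and $Ak$: since $A$ lies on the arc $(k,1)$, a cycle edge with \emph{both} endpoints in $X$ avoids \emph{both} $A1$ and $Ak$, while a cycle edge with exactly one endpoint in $X$ avoids at least one of $A1$, $Ak$ (its other endpoint either lies on the same side of the relevant chord or is the vertex $1$ or $k$ itself, in which case the other chord is avoided). Hence every cycle edge $e$ avoids at least as many of $\{A1,Ak\}$ as it has endpoints in $X$, and summing over the nine cycle edges gives at least $\sum_e |e\cap X| = 2|X| = 2(k-2)$ distinct pairs (each pair consists of a specific cycle edge and a specific $\alpha$-edge, so no injectivity bookkeeping is needed). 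This is precisely the argument in the paper; your vertex-by-vertex charging can be completed only by importing this ``two avoidances for an edge inside $X$'' fact, which your proposal does not contain.
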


\begin{proof}
  Let $X$ be the set of the $k-2$ vertices $2,3,\dots,k-1$.  If a
  cycle edge connects two vertices of $X$ then it avoids the
  $\alpha$-edges $A1$ and $Ak$.  If a cycle edge is incident to one of
  the vertices of $X$ then it avoids one of the $\alpha$-edges $A1$
  and $Ak$.  Thus, for each cycle edge $e$, the number of
  $\alpha$-edges avoided by $e$ is at least as big as the number of
  incidences of $e$ with $X$. Since the total number of incidences of
  the vertices in $X$ with the cycle edges is exactly $2|X|=2(k-2)$,
  the number of $\alpha*$-avoidances is at least $2(k-2)$.
\end{proof}

We now distinguish six cases.

\paragraph{Case 1: $k=3$ and there is no $\gamma\gamma$-avoidance.}
In this case the $\beta$-vertices are $1$, $2$, $3$ and the three
$\gamma$-edges are $47$, $58$, and $69$. Each of them has span~$2$ and
therefore, by Lemma~\ref{l:span-vs.-avoidances}, it avoids at least
two of the $\beta$-edges.  Since the total weight of the
$\gamma$-edges is~$5$, the $\beta\gamma$-avoidances have total weight
at least~$10$.  Since there are at least two $\alpha*$-avoidances by
Lemma~\ref{l:k-vs.-avoidances}, we get that the weighted loss of the
drawing of $W_{14}$ (i.e., the total weighted number of avoidances) is
at least $12$ in Case~1.

\paragraph{Case 2: $k=3$ and there is a $\gamma\gamma$-avoidance.}
The $\beta$-edge $\beta_4$ containing the vertex $4$ has the five
$C$-vertices $5$, $6$, $7$, $8$, $9$ on the same side and therefore
avoids two $\gamma$-edges. Since any two $\gamma$-edges have total
weight three or four, it follows that $\beta_4$ appears in
$\beta\gamma$-avoidances of total weight at least three. By symmetry,
$\beta_9$ also appears in $\beta\gamma$-avoidances of total weight at
least three.

The edge $\beta_5$ has the four $C$-vertices $6$, $7$, $8$, $9$ on the
same side and therefore avoids at least one $\gamma$-edge.  By
symmetry, $\beta_8$ also avoids at least one $\gamma$-edge.

Summarizing, the edges $\beta_5,\beta_6,\beta_8,\beta_9$ appear in
$\beta\gamma$-avoidances of total weight at least $3+1+1+3=8$.
Additionally, there are two $\alpha*$-avoidances and there is a
$\gamma\gamma$-avoidance which is necessarily of weight two or
four. It follows that the avoidances have total weight at least
$8+2+2=12$.

\paragraph{Case 3: $k=4$ and there is no $\gamma\gamma$-avoidance.}
Without loss of generality, we assume that the $B$-vertices are
$1,3,4$. Then the $\gamma$-edges are $27$, $58$, $69$. The edge $58$
avoids the $\beta$-edges $\beta_2$ and $\beta_9$.  Similarly, the edge
$69$ avoids the $\beta$-edges $\beta_2$ and $\beta_5$.  Since the
edges $58$ and $69$ have total weight three or four, they appear in
$\beta\gamma$-avoidances of total weight at least $3\cdot 2=6$.

The edge $\beta_2$ avoids either the two $\beta$-edges incident to the
$C$-vertex $1$ or the two $\beta$-edges incident to the $C$-vertex
$4$. Thus, there are at least two $\beta\beta$-avoidances.  Also,
there are at least four $\alpha*$-avoidances by
Lemma~\ref{l:k-vs.-avoidances}.  Altogether, the avoidances have total
weight at least $6+2+4=12$.

\paragraph{Case 4: $k=4$ and there is a $\gamma\gamma$-avoidance.}
As in Case 3, we assume that the $B$-vertices are $1,3,4$.  The edge
$\beta_2$ avoids two of the three $\gamma$-edges, which gives two
$\beta\gamma$-avoidances of total weight three or four. The edge
$\beta_2$ also avoids at least one $\beta$-edge connecting one of the
vertices $1$ and $4$ with one of the vertices in the interval $[5,9]$.

Since there is a $\gamma\gamma$-avoidance, the interval $[5,9]$
contains the vertices of a $\gamma$-edge $\gamma_0$ of span at most
$1$. The edge $\gamma_0$ avoids at least one $\gamma$-edge and at
least two $\beta$-edges different from $\beta_2$ (for example, if
$\gamma_0$ connects vertices $6$ and $8$, it avoids the $\beta$-edges
$\beta_5$ and $\beta_9$).  The $\gamma\gamma$-avoidance has weight two
or four, and the two $\beta\gamma$-avoidances have total weight at
least two.

Summarizing, avoidances involving no $\alpha$-edge have total weight
at least $3+1+2+2=8$.  Since there are at least four
$\alpha*$-avoidances by Lemma~\ref{l:k-vs.-avoidances}, all avoidances
have total weight at least $8+4=12$.

\paragraph{Case 5: $k=5$.}
The two $\beta$-edges with both vertices in the interval $[1,5]$ have
span at most $2$, and therefore appear in at least four avoidances
among cycle edges.  There are at least six $\alpha*$-avoidances by
Lemma~\ref{l:k-vs.-avoidances}.  It follows that there are at least
ten avoidances.

Since each of the two $\gamma$-edges of weight two avoids another
edge, there are at least two avoidances of weight two or an avoidance
of weight four. We conclude that all the avoidances have total weight
at least $10+2=12$.

\paragraph{Case 6: $k\geq6$.}
Suppose first that all nine cycle edges have span three. Then the
cycle edges form the cycle $162738495$, The $B$-vertices are $1$, $4$,
$7$, the $\gamma$-edge $26$ avoids the two $\alpha$-edges $A1$ and
$A7$, and each of the other eight cycle edges avoids exactly one of
the $\alpha$-edges $A1$, $A4$, $A7$.  Thus, there are ten
$\alpha*$-avoidances. Since each of the two $\gamma$-edges of weight
two appears in at least two avoidances, the total weight of avoidances
is at least $10+2=12$.

Suppose now that there is a cycle edge with span smaller than three.
Then this edge avoids at least two cycle edges. Additionally there are
at least eight $\alpha*$-avoidances.  Altogether there are at least
$2+8=10$ avoidances. Since each of the two $\gamma$-edges of weight
two appears in some avoidance, all the avoidances have total weight at
least $10+2=12$.

\section{It is Unlikely that \textsc{MaxWtCrNmb} Admits a PTAS}
\label{sec:ptas}

Due to the additive term $2m$ in the lower and upper bound for
$\mwtcr(G)$ (see the sequence of inequalities at the end of the proof
of Theorem~\ref{thm:weighted-topological}), the existence of a PTAS
for
\textsc{MaxWtCrNmb} does not directly imply a PTAS for
\textsc{3MaxCut}.  A PTAS for \textsc{MaxWtCrNmb} would, however, give
us a very good estimation of the quantity $q=2m+\mcut(G)$.  Since $G$
is 3-regular, we know that $2m/3 \le \mcut(G) \le m$.  Hence, assuming
a $(1-\varepsilon)$-approximation of $\mcut(G)$, the ratio between the
smallest and the largest possible value of~$q$ is
$(8m/3-\varepsilon)/(3m)=8/9-\varepsilon'=0.\overline{8}-\varepsilon'$.
This would be the approximation ratio of an algorithm for
\textsc{3MaxCut} based on a hypothetical PTAS for \textsc{MaxWtCrNmb}.
\textsc{3MaxCut} is APX-hard; the best known inapproximability ratio
is 0.997 \cite{bk-stir-ICALP99}, which is too large to yield a
contradiction to the existence of a PTAS for \textsc{MaxWtCrNmb}.
However, to the best of our knowledge, the best approximation
algorithm for \textsc{3MaxCut} is the semidefinite program of Goemans
and Williamson \cite{gw-iaamc-JACM95} for general \textsc{MaxCut}.
Its approximation ratio is $\approx 0.878$, and any improvement beyond
this factor, even for the special case of 3-regular graphs, would be
rather unexpected.

\end{document}